\documentclass[nofootinbib,reprint,amssymb,superscriptaddress,twocolumn]{revtex4-2}

\usepackage[pdftex]{graphicx}
\usepackage{subfig}
\usepackage{tabularx}
\usepackage{xcolor}
\usepackage{tikz}
\usepackage{amsmath,amsfonts,amssymb,mathtools}
\usepackage{amsthm}

\newtheorem{theorem}{Theorem}       
\newtheorem{lemma}[theorem]{Lemma}
\newtheorem{proposition}[theorem]{Proposition}
\newtheorem{corollary}[theorem]{Corollary} 
\newtheorem{definition}{Definition}

\usepackage{tasks}
\usepackage{csquotes}
\usepackage{diagbox}
\usepackage{float}
\usepackage[utf8]{inputenc}
\usepackage{csquotes}

\usepackage{float}

\usepackage{textcomp}
\usepackage{bbm,bm}
\usepackage[normalem]{ulem}
\usepackage{comment}
\usepackage{physics}
\usepackage[colorlinks]{hyperref}

\newcommand{\R}{\mathbb R}
\newcommand{\Q}{\mathbb Q}
\newcommand{\C}{\mathbb C}
\newcommand{\Herm}{\operatorname{Herm}}
\newcommand{\Sym}{\operatorname{Sym}}
\newcommand{\Sep}{\operatorname{Sep}}
\newcommand{\dist}{\operatorname{dist}}
\newcommand{\one}{\mathbf 1}

\usepackage{pgfplots}
\definecolor{blueprl}{RGB}{46,48,146}
\usepgfplotslibrary{groupplots}
\usepackage{lipsum}

\newcommand{\N}{\mathbb{N}}
\newcommand{\Z}{\mathbb{Z}}

\newcommand{\sans}[1]{\text{\normalfont\fontfamily{lmss}\selectfont #1}}
\newcommand{\np}{\sans{NP}}
\newcommand{\p}{\sans{P}}

\newcommand{\orth}{\mathrm{O}}
\newcommand{\wt}{\widetilde}
\newcommand{\supp}{\operatorname{supp}}
\numberwithin{equation}{section}
\newcommand{\supplementtocstart}{}

\def\QUICS{Joint Center for Quantum Information and Computer Science,
NIST/University of Maryland, College Park, Maryland 20742, USA}
\def\JQI{Joint Quantum Institute, NIST/University of Maryland, College Park, Maryland 20742, USA}
\def\ASTAR{Quantum Innovation Centre (Q.InC), Agency for Science, Technology and Research (A*STAR), 4 Fusionopolis Way, Kinesis \#05-01, Singapore 138635, Republic of Singapore}
\def\CQT{Centre for Quantum Technologies (CQT), National University of Singapore, Singapore 117543, Republic of Singapore}
\def\UMIACS{UMIACS, University of Maryland, College Park, Maryland 20742, USA}
\def\UEC{Graduate School of Informatics and Engineering, The
  University of Electro-Communications, Tokyo 182-8585, Japan}
\begin{document}
 \title{Deciding the Attainability of the Multiparameter\\ Quantum Fisher Information is \np-Hard}

\author{Lorc{\'a}n O. Conlon}
\email{lorcanconlon@gmail.com}
\affiliation{\JQI}
\affiliation{\QUICS}
\author{Laura Shou}
\affiliation{\JQI}
\affiliation{\QUICS}
\author{V Vijendran}
\affiliation{\CQT}
\affiliation{\ASTAR}
\author{Kishor Bharti}
\altaffiliation{Currently at IonQ Inc.}
\affiliation{\QUICS}
\affiliation{\UMIACS}
\author{Aritra Das}
\affiliation{ Center for Quantum Software and Innovation,
 University of Technology Sydney, Ultimo, NSW 2007, Australia}
 \author{Simon K. Yung}
\affiliation{Department of Quantum Science and Technology, Research School of Physics, The Australian National University, Canberra, ACT 2601, Australia}
\author{Jacob M. Taylor}
\affiliation{\QUICS}
\affiliation{Axiomatic AI, Inc., 675 Massachusetts Ave, Cambridge, MA 02139}
\author{Alexey~V.~Gorshkov}
\affiliation{\JQI}
\affiliation{\QUICS}
\author{Jun Suzuki}
\email{junsuzuki@uec.ac.jp}
\affiliation{\UEC}
\author{Syed M. Assad}
\email{cqtsma@gmail.com}
\affiliation{\ASTAR}

\begin{abstract}
The quantum Fisher information (QFI) sets a fundamental bound on the attainable precision when estimating multiple parameters simultaneously. Incompatibility among the individually optimal measurements can, in some cases, imply that the precision limit set by the QFI is not attainable. In certain special cases, including pure states and full-rank states, the exact conditions for when the precision limit set by the QFI can be saturated are known. However, general conditions for the attainability of the QFI with measurements on individual copies of the quantum state have long been sought. Indeed, this was recently stated as one of the five problems in quantum information theory highlighted by [\mbox{P. Horodecki \textit{et al}}, \mbox{PRX Quantum} 3, 010101 (2022)]. In this work we prove that exact attainability with individual measurements is \np-hard to decide, even for a restricted set of real, constant-rank quantum models. The source problem for our proof is the \np-hard problem of deciding whether a given bipartite density matrix is separable or not. Our result shows that the longstanding difficulty in obtaining general conditions for QFI attainability reflects a fundamental computational obstruction, rather than merely a limitation of existing mathematical techniques: unless $\p=\np$, no efficiently computable necessary-and-sufficient criterion can exist in general.
\end{abstract}
\maketitle

\section{Introduction}
Measurements allow scientists to learn about the universe we live in, and as such constitute an invaluable part of the scientific process. Quantum mechanics simultaneously offers new opportunities for enhanced measurements~\cite{aasi2013enhanced,marciniak2021optimal,huang2022quantum,conlon2023approaching}, and also new restrictions on measurements through the uncertainty principle~\cite{robertson1929uncertainty,heisenberg1985anschaulichen,conlon2026100}. Quantum multiparameter estimation is a mathematical framework for analysing the optimal way to extract information from a physical quantum system, where the limitations imposed by the uncertainty principle are particularly pronounced. In this framework a quantum density matrix is described in terms of $p$ unknown parameters $\rho=\rho(\theta_1,...,\theta_p)$. By performing measurements and assigning a predicted value to each observed sequence of outcomes one can construct an estimate of each unknown parameter $\hat{\theta}_i$, whose uncertainty is limited by the covariance matrix of a given estimation procedure, $V$. The central aim of quantum multiparameter estimation is to minimise the mean squared error (MSE), $\Tr[V]$.

The value of $V$ is lower bounded by a quantity known as the quantum Fisher information (QFI) matrix $\mathcal{J}_\text{Q}$, via $V\geq\mathcal{J}_\text{Q}^{-1}$~\cite{helstrom1967minimum,helstrom1968minimum}. For minimisation, it is common to turn to the scalar quantity $\Tr[V]$ that satisfies $\Tr[V]\geq\Tr[\mathcal{J}_\text{Q}^{-1}]=\mathcal{C}_\text{Q}$, known as the quantum Cram{\'{e}}r-Rao bound (QCRB)\footnote{This is also referred to as the Helstrom bound or the symmetric logarithmic derivative Cram{\'{e}}r-Rao bound in the literature.}. Remarkably, in the single-parameter case, this inequality is tight and the optimal measurement and estimator satisfying $V_{11}=\mathcal{C}_\text{Q}=1/\mathcal{J}_\text{Q}$ are known~\cite{braunstein1994statistical} enabling optimal experiment design. However, when estimating multiple parameters the situation is more nuanced. For example, the optimal measurement for estimating one parameter $\theta_1$ may not be the same as the optimal measurement for estimating any other parameter $\theta_i$. When these optimal measurements do not commute, it may not be possible to achieve equality above, i.e. there may not exist a measurement for which $V=\mathcal{J}_\text{Q}^{-1}$. To aid attempts to design practical experiments with optimal precision it is desirable to understand when equality is possible in the above inequality. We shall refer to the existence of a measurement such that $V=\mathcal{J}_\text{Q}^{-1}$ as the condition for \textit{exact attainability} of the QCRB (Note that $\Tr[V-\mathcal{J}_\text{Q}^{-1}]=0$  implies $V=\mathcal{J}_\text{Q}^{-1}$ as $V-\mathcal{J}_\text{Q}^{-1}\succeq0$). This leads us to a longstanding open question in quantum parameter estimation, recently stated as one of five open problems in quantum information~\cite{horodecki2022five}: \textit{Under what conditions is the multiparameter QCRB exactly attainable with measurements on individual copies of the quantum state $\rho$?}

There has been great interest in this problem since Helstrom introduced the QCRB in 1967~\cite{helstrom1967minimum,helstrom1968minimum}. In 1973, Yuen and Lax introduced an alternative Cram{\'{e}}r-Rao bound that is sometimes tighter than QCRB~\cite{yuen1973}. Shortly after, Holevo introduced a bound that unified the above bounds, the Holevo Cram{\'{e}}r-Rao bound, $\mathcal{C}_\text{H}$~\cite{holevo1973statistical,holevo2011probabilistic}. Recently it has been shown that this is a tight bound which can be asymptotically approached through entangling measurements on infinitely many copies of the quantum state~\cite{kahn2009local, yamagata2013quantum, yang2019attaining}. Additionally, the conditions for $\mathcal{C}_\text{H}=\mathcal{C}_\text{Q}$ are known~\cite{ragy2016compatibility}. Taken together Refs.~\cite{kahn2009local, yamagata2013quantum, yang2019attaining} and \cite{ragy2016compatibility} provide the conditions under which the QCRB is attainable when one allows entangling measurements on infinitely many copies of the quantum state. However, owing to the practical difficulty in implementing entangling measurements on even a small number of copies of the quantum state~\cite{conlon2023discriminating,hou2018deterministic,yung2025saturating,yung2026beating}, it is still desirable to answer the above open problem.

Significant progress has been made towards attainability with measurements on individual copies. In 2002 Matsumoto answered the above question for pure states~\cite{matsumoto2002new} (see also Ref.~\cite{pezze2017optimal}). For mixed states the Nagaoka--Hayashi Cram{\'{e}}r-Rao bound, $\mathcal{C}_\text{NH}$, provides an efficiently computable lower bound on the precision attainable with measurements on individual copies~\cite{nagaoka2005generalization,nagaoka2005new,hayashi1999,conlon2021efficient}. However, this bound is not tight in general~\cite{hayashi2023tight,conlon2025role} and so cannot be used to answer the above question. Recently, Suzuki, Yang and Hayashi appear to have answered the above question in Appendix B.1 of Ref.~\cite{Suzuki2020}. They prove that the QCRB is attainable with measurements on individual copies of the quantum state if and only if the kernel space of a certain set of operators (to be introduced shortly) can be chosen such that the operators commute. However, this does not provide any constructive conditions, the difficulty has merely been transferred from one problem (attainability of the QCRB) to another (deciding whether the operators can be chosen to commute). Similarly, Hayashi and Ouyang have recently introduced a conic programming approach that proves that the QCRB is attainable with individual measurements if and only if an operator associated with the QCRB lies in the separable cone~\cite{hayashi2023tight}. Again, this does not provide a constructive condition for attainability of the QCRB.

Motivated by these observations, and the fact that this longstanding search has been unfruitful, we refine our question to ask: \textit{Does there exist an efficient algorithm for deciding whether the multiparameter QFI Cram{\'{e}}r-Rao bound is attainable with measurements on individual copies of the quantum state?} By an efficient algorithm we mean one whose runtime is polynomial in the total bit length of the matrices specifying the multiparameter estimation model. In this work, we answer this question in the negative, proving that deciding whether the QCRB is attainable is \np-hard. Therefore, provided $\p\neq\np$, we have proven that no such efficiently computable condition exists. While this does not directly answer the open question in Ref.~\cite{horodecki2022five}, this algorithmic complexity provides significant insight---demonstrating that the obstructions researchers have faced over the past half century to decide when the QCRB is attainable are fundamental rather than a deficiency of existing methods.

\section{Preliminary material}
In this section we first introduce quantum multiparameter estimation, before describing the source problem for our \np-hardness proof.

\subsection{Quantum multiparameter estimation}
We consider a $D$-dimensional Hilbert space $\mathcal{H}$. A quantum statistical model is defined by a density matrix $\rho(\theta_1,...,\theta_p)$ and the corresponding derivatives $\rho_i=\partial_i\rho$, where $\partial_i$ denotes derivative with respect to parameter $\theta_i$. 

To construct estimates of the unknown parameters, $\hat{\theta}_i$, one can implement a measurement described by a positive operator valued measure (POVM), which is a set of positive operators $\{\Pi_k\}$ that sum to the identity
\begin{equation}
\label{Eq:BGsumpi}
\sum_{k}\Pi_{k}=I_{D}\;.
\end{equation}
The $k$-th measurement outcome occurs with probability
\mbox{$p_k=\Tr[\rho\Pi_{k}]$} such that the MSE matrix is given by 
\begin{equation}
\label{eqMSEmatrix}
[V(\theta)]_{ij}=\sum_{k}(\hat{\theta}_{i}(k)-\theta_{i})(\hat{\theta}_{j}(k)-\theta_{j})p_k\;.
\end{equation}
In this work we focus on local estimation, where we only require that the estimators are unbiased to first order around a known value $\theta_0=(\theta_{0,1},\theta_{0,2},...,\theta_{0,p})$. The local unbiased conditions are that the expected value $\langle\hat{\theta}_i\rangle$  satisfies $\langle\hat{\theta}_i\rangle=\theta_{0,i}$ to first order in the Taylor series expansion~\cite{Fujiwara2006Adaptive}:
\begin{equation}
    \begin{split}
&\langle\hat{\theta}_i\rangle=\theta_{0,i}\;,\\
&\frac{\partial }{\partial\theta_j}\langle\hat{\theta}_i\rangle\bigg|_{\theta=\theta_0}=\delta_{ij}\;.
    \end{split}
\end{equation}
Without loss of generality we take $\theta_0=(0,0,...,0)$. Going forward, we use $\rho_0$ to denote $\rho$ evaluated at $\theta_0$ and $\rho_i$ to denote the corresponding derivative evaluated at $\theta_0$. In this setting we can define the most informative bound as
\begin{equation}
\label{eq:CMI}
\mathcal{C}_\text{MI}(\rho,\rho_i,W) = \inf_{\{\Pi_k\},\hat{\theta}}\Tr[WV(\theta)]\;,
\end{equation}
where the infimum is taken over all $\{\Pi_k\},\hat{\theta}$ satisfying the locally unbiased conditions and $W$ is a positive definite weight matrix. For the remainder of this work we set $W=I_p$ with no loss of generality\footnote{To see why the restriction to $W=I_p$ is sufficient, note that we are only concerned with equality conditions here and $\Tr[V-\mathcal{J}_\text{Q}^{-1}]=0$ combined with $V-\mathcal{J}_\text{Q}^{-1}\succeq0$ implies $V=\mathcal{J}_\text{Q}^{-1}$, and therefore, $\Tr[WV]=\Tr[W\mathcal{J}_\text{Q}^{-1}]$. }. Note that the infimum in Eq.~\eqref{eq:CMI} is not always attainable~\cite{yamagata2026sufficient}. When the bound is attainable, it corresponds to the MSE as the POVM and estimator are unbiased. 

No general efficient method
is known for finding the optimal measurement in the minimization
problem Eq.~\eqref{eq:CMI}. As such we turn to Cram{\'{e}}r-Rao bounds, such as the QCRB defined in the introduction, which provide lower bounds on the covariance of any locally unbiased estimator. To define the QCRB, we need to introduce the symmetric logarithmic derivative (SLD) operators as Hermitian operators that are implicitly defined through the equation $\rho_i=\frac12\left(L_i\rho+\rho L_i\right)\;$. The corresponding QFI matrix is then
\begin{equation}
\mathcal{J}_{\text{Q},ij}=\frac12\Tr[\rho(L_iL_j+L_jL_i)]\;,
\end{equation}
and the corresponding QCRB is given by
\begin{equation}
  \Tr[V]\geq\mathcal{C}_\text{Q}=\Tr[\mathcal{J}_\text{Q}^{-1}]\;.
\end{equation}
The central open question asked by many researchers for the past 60 years has been: under what conditions does there exist a locally unbiased POVM and estimator such that $V=\mathcal{J}_\text{Q}^{-1}$? Our results prove that the lack of solution is due to a fundamental computational bottleneck. From a computational perspective, we can rewrite the question of whether $V=\mathcal{J}_\text{Q}^{-1}$ as the following problem:
\begin{definition}[Exact attainability decision problem]\label{def:target}
The input consists of rational matrices $(\rho_0,\rho_1,\ldots,\rho_p)$ satisfying $\rho_0=\rho_0^\dagger\succeq0$, $\Tr[\rho_0]=1$, $\rho_i=\rho_i^\dagger$, and $\Tr[\rho_i]=0$, and for which the SLD equations are solvable and $\mathcal{J}_\text{Q}\succ0$.  The question is whether there exists a finite-outcome POVM and a locally unbiased estimator satisfying $V=\mathcal{J}_\text{Q}^{-1}$.
\end{definition}

As mentioned in the introduction, non-constructive conditions for when $V=\mathcal{J}_\text{Q}^{-1}$ have been presented in Ref.~\cite{Suzuki2020}. Observe that the equation $\rho_i=\frac12\left(L_i\rho+\rho L_i\right)\;$ uniquely determines the blocks of $L_i$ in the support and support-kernel subspaces of $\rho$. The elements of $L_i$ in the kernel space can be freely chosen provided $L_i$ is Hermitian. It is also possible to consider a finite extension of the model via an ancilla space $\mathcal{A}$, such that $\tilde{\mathcal{H}}=\mathcal{H}\oplus\mathcal{A}$, $\tilde{\rho}_0=\rho_0\oplus0_{\mathcal{A}}$, and $\tilde{\rho_i}=\rho_i\oplus0_{\mathcal{A}}$ (see Appendix~\ref{apen:extendedSLDs} and Ref.~\cite{conlon2025role} for more details). Such an extension allows more freedom in how the kernel of $L_i$ is chosen. With this in mind, we present the following theorem from Appendix B.1 of Ref.~\cite{Suzuki2020}:
\begin{theorem}[Suzuki, Yang, \& Hayashi.~\cite{Suzuki2020}]\label{theorem:finitecriterion}
   Assume $\mathcal{J}_{\text{Q}}\succ0$.  A finite-outcome locally unbiased estimator attains covariance $\mathcal{J}_{\text{Q}}^{-1}$ if and only if a valid extension of the SLD operators can be chosen such that $[L_i,L_j]=0\;,\;\forall i,j$.
\end{theorem}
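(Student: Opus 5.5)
We argue both directions, treating the ancilla extension as a Naimark-type dilation. We write $\rho=\rho_0$ and use the standard lower-bound argument behind the QCRB.

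\emph{Sufficiency.} Suppose an extension $\tilde{\mathcal H}=\mathcal H\oplus\mathcal A$ and extended SLDs $\tilde L_i$ exist with $[\tilde L_i,\tilde L_j]=0$. Then the $\tilde L_i$ share a finite spectral decomposition $\tilde L_i=\sum_k \lambda_{i,k}P_k$ with orthogonal projectors $P_k$ on $\tilde{\mathcal H}$. Compressing to $\mathcal H$ gives the finite POVM $\Pi_k=P_{\mathcal H}P_kP_{\mathcal H}$. We choose the estimator $\hat\theta_i(k)=\sum_j(\mathcal J_\text{Q}^{-1})_{ij}\lambda_{j,k}$. Since $\tilde\rho=\rho\oplus 0$, we have $\Tr[\tilde\rho P_k]=\Tr[\rho\Pi_k]$. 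Local unbiasedness then follows from two identities. First, $\Tr[\rho \tilde L_j]=\Tr[\rho_j]=0$. Second, $\Tr[\rho_i\tilde L_j]=\tfrac12\Tr[\tilde\rho\{\tilde L_i,\tilde L_j\}]=\mathcal J_{\text{Q},ij}$; this holds because the support and support-kernel blocks of $\tilde L_i$ are fixed, and the kernel block does not contribute when traced against $\tilde\rho$ or $\tilde\rho_i$. Together these give $\sum_k \hat\theta_i(k)\Tr[\rho_j\Pi_k]=\delta_{ij}$. For commuting operators the covariance is $V=\mathcal J_\text{Q}^{-1}\,\Tr[\tilde\rho\tilde L\tilde L^{T}]\,\mathcal J_\text{Q}^{-1}$, and this equals $\mathcal J_\text{Q}^{-1}$.

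\emph{Necessity.} Take a finite POVM $\{\Pi_k\}_{k=1}^n$ and a locally unbiased estimator with $V=\mathcal J_\text{Q}^{-1}$. By Naimark, dilate it to projectors $P_k$ on $\mathcal H\oplus\mathcal A$ with $\Pi_k=P_{\mathcal H}P_kP_{\mathcal H}$, where $\dim\mathcal A$ is finite because $n$ is finite. Define the commuting Hermitian operators $X_i=\sum_k\hat\theta_i(k)P_k$ and $M_i=\sum_j\mathcal J_{\text{Q},ij}X_j$. We aim to show that each $M_i$ is a valid extended SLD for $\tilde\rho$. The key step is a tightness analysis of the Cauchy--Schwarz inequality underlying the QCRB. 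Unbiasedness gives $\Tr[\tilde\rho_j X_i]=\delta_{ij}$, which can be written as $\mathrm{Re}\,\Tr[\tilde\rho \tilde L_j X_i]=\delta_{ij}$ for any extended SLD $\tilde L_j$. Moreover $V_{ij}=\Tr[\tilde\rho X_iX_j]$, since for projective measurements $\sum_k\hat\theta_i\hat\theta_j\Tr[\tilde\rho P_k]=\Tr[\tilde\rho X_iX_j]$.

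Work in the real inner product $\langle A,B\rangle=\mathrm{Re}\,\Tr[\tilde\rho A^\dagger B]$. For real vectors $u,v$, Cauchy--Schwarz gives $\langle u\cdot X, v\cdot \tilde L\rangle^2\le \langle u\cdot X,u\cdot X\rangle\,\langle v\cdot\tilde L,v\cdot\tilde L\rangle$. Choosing $v=\mathcal J_\text{Q}^{-1}u$ and using $V=\mathcal J_\text{Q}^{-1}$ makes this an equality. Hence, in the seminorm $\|A\|^2=\Tr[\tilde\rho A^\dagger A]$, we get $\|X_i-\sum_j(\mathcal J_\text{Q}^{-1})_{ij}\tilde L_j\|=0$. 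That is, $(M_i-\tilde L_i)\tilde\rho^{1/2}=0$ for every choice of extended SLD $\tilde L_i$. So $M_i$ agrees with $\tilde L_i$ on the support and support-kernel blocks, which are exactly the blocks fixed by the SLD equation; both are Hermitian, so $M_i$ is itself a valid extension. The $M_i$ commute because the $X_i$ do.

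\emph{Main obstacle.} The delicate step is this last block-identification. The condition $(M_i-\tilde L_i)\tilde\rho^{1/2}=0$ forces the columns on $\mathrm{supp}\,\tilde\rho$ to vanish, and Hermiticity then handles the support-kernel rows. One must also check that "valid extension" in the statement means exactly this freedom in the kernel$\oplus\mathcal A$ block, as in Appendix~\ref{apen:extendedSLDs}. A second point to confirm is that $\mathcal J_\text{Q}$ computed from the extended SLDs is unchanged, which holds since $\tilde\rho$ annihilates the free blocks. Finally, finiteness of the POVM is what guarantees a finite-dimensional ancilla, consistent with the finite extensions allowed in the statement.
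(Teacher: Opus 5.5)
Your proof is correct. The paper does not prove this theorem itself; it imports it from Suzuki, Yang and Hayashi. Your two directions are nonetheless exactly the exact-case specializations of the arguments the paper runs in Appendix~\ref{apen:approxattain}. For necessity, Proposition~\ref{prop:approximate-attainability} uses the Naimark dilation with $A_i=\sum_j\mathcal{J}_{\text{Q},ij}Z_j$ and the identity $\langle Z_i-L^i,Z_j-L^j\rangle_{\rho_0}=V_{ij}-(\mathcal{J}_\text{Q}^{-1})_{ij}$, which at $Q=0$ forces $A_iP=L_iP$. For sufficiency, Lemma~\ref{lem:upper-stability} measures commuting extensions projectively with estimator $G^{-1}a$, and at $E_i=0$ this reduces to your $G=\mathcal{J}_\text{Q}$ construction.
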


\begin{figure}[t]
    \centering
    \includegraphics[width = 0.45\textwidth]{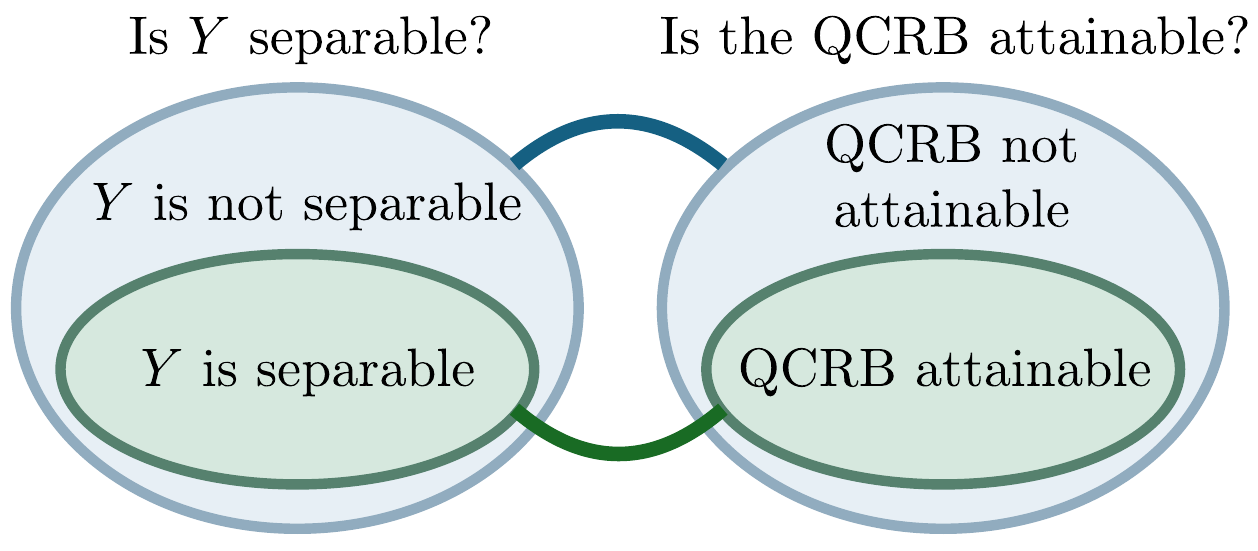}
    \caption{\textbf{Schematic depiction of main results.} The left and right outer blue ovals denote the set of all matrices $Y$ of a certain dimension and the corresponding set of multiparameter estimation problems defined in Theorem~\ref{th:quantum-encoding} respectively. In this work, we map the problem of deciding whether a given matrix $Y$ is separable or not to the problem of deciding whether the QCRB is attainable (Theorem~\ref{th:quantum-encoding}). As it is known that deciding whether a bipartite density matrix is separable or not is \np-hard (Theorem~\ref{thm:gurvits}), this implies that deciding whether the QCRB is attainable or not is also \np-hard (Theorem~\ref{th:nphard}).}
    \label{figschematic}
\end{figure}

\subsection{Hardness source problem}
\label{subsec:source}

We now introduce the source problem for our \np-hardness proof. Gurvits proved that the weak membership problem for the convex set of separable normalized bipartite density
matrices is \np-hard~\cite{gurvits2004classical}. We consider a slight modification of Gurvits's original work. Consider the set
\begin{align}
  \Sep^{\R}_{p,r}
  &=\operatorname{cone}\{zz^T\otimes ww^T:z\in\R^p,\ w\in\R^r\},
  \label{eq:sepcone}
\end{align}
where $\text{cone}\{S\}$ denotes all finite non-negative linear combinations of $S$. Note that Eq.~\eqref{eq:sepcone} is the real-separable cone. Ref.~\cite{gurvits2004classical} specifically considered weak membership of the complex separable cone, for our purposes we specialise to the real cone.

As we explain in Appendix~\ref{apen:Gurvits}, the following theorem follows almost directly from Ref.~\cite{gurvits2004classical}:
\begin{theorem}[Gurvits~\cite{gurvits2004classical}]\label{thm:gurvits}
Given a rational $Y \geq 0$, $Y\in\Sym_p(\R) \otimes \Sym_r(\R)$, with $\Tr[Y]=1$, and $p = r(r-1)/2+2$, the exact real-separability decision problem (deciding if $Y \in \Sep^{\R}_{p,r}$) is \np-hard under polynomial-time Turing reductions.
\end{theorem}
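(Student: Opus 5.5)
We derive Theorem~\ref{thm:gurvits} from Gurvits's reduction by checking that his construction already yields real matrices, and that exact membership is at least as hard as weak membership. Gurvits reduces an \np-complete problem to separability via a quadratic form. Take a graph $G$ on $r$ vertices and the associated set of real symmetric (or antisymmetric) $r\times r$ matrices $A_1,\ldots,A_k$; in the Gurvits/Gharibian line of argument these span $\R^{r(r-1)/2}$ together with a couple of extra elements, which is where $p=r(r-1)/2+2$ comes from. The problem is to compute
\begin{equation*}
\max_{\|w\|=1}\sum_{i} (w^TA_iw)^2 ,
\end{equation*}
which encodes a clique or partition question and is \np-hard to compute exactly, and even to within inverse-polynomial accuracy. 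This quantity is the support function of the convex set $\{\sum_i (w^TA_iw)\, e_i : \|w\|=1\}$, equivalently the linear functional $\Tr[(\sum_i e_ie_i^T\otimes A_i)\,\cdot\,]$ maximized over product states $zz^T\otimes ww^T$ with $z\in\R^p$, $w\in\R^r$. All data are real, so the optimisation is over the real separable set, and the maximum of a linear functional over $\Sep^{\R}_{p,r}\cap\{\Tr=1\}$ is attained at a real product extreme point.

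The argument then has three steps. First, write down Gurvits's instance with real symmetric $A_i$. Where Gurvits uses complex Hermitian matrices, replace each by its real symmetric part; the imaginary antisymmetric part contributes nothing to $w^TA_iw$ for real $w$. This checks that the hard optimization is over real product vectors. Second, invoke the Yudin--Nemirovski / Gr\"otschel--Lov\'asz--Schrijver equivalence between weak optimization and weak membership for well-bounded convex bodies. The set of trace-one real separable matrices is contained in a ball and contains a ball around $I/(pr)$ of inverse-polynomial radius (Gurvits--Barnum), so a weak membership oracle gives weak optimization by a polynomial-time Turing reduction. Third, note that an exact decision oracle for rational inputs trivially implements the weak membership oracle, since the weak oracle is allowed to answer either way near the boundary. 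Rational queries can be assumed because the ellipsoid method queries only rational points of polynomial bit length.

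The main obstacle is the second step. There we must confirm that the real separable body has inner radius bounded below polynomially in its own affine hull, the span of $\Sym_p(\R)\otimes\Sym_r(\R)$ restricted to trace one. This hull is not the full Hermitian space, so the Gurvits--Barnum ball argument has to be redone in the real symmetric setting. The plan is to take $Y=I/(pr)+\Delta$ and expand $\Delta$ in a real product basis of symmetric matrices. Each basis term $\lambda\, S\otimes T$ can be absorbed by writing $I\otimes I\pm \epsilon\, S\otimes T$ as a combination of $(I\pm \epsilon S)\otimes(I\pm T)$-type terms, which are real separable when $\|S\|,\|T\|\le 1$. This gives a radius of order $1/\mathrm{poly}(pr)$. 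We must also check that the constant-accuracy hardness of the quadratic-form problem survives the polynomial loss in the reduction. This holds because Gurvits's gap is inverse-polynomial in $r$ with rational data of polynomial bit size.
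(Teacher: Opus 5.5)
Your overall strategy matches the paper's. The paper's proof cites Gurvits's Definition~6.2, Theorem~6.7 and Remark~6.8, whose RSDF-based weak-membership hardness is already posed over real symmetric data. It then passes from the cone $\Sep^{\R}_{p,r}$ to its compact trace-one base $\mathcal K^{\R}_{p,r}$. Finally, it observes that an exact oracle answers every weak-membership query, with non-positive-semidefinite queries rejected outright. You should add that last point, since the exact problem is only posed for $Y\succeq0$.

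Where you go beyond the citation and reconstruct Gurvits's internal reduction, one step fails as written. Your functional $C=\sum_i e_ie_i^T\otimes A_i$ gives
\[
\Tr\bigl[C\,(zz^T\otimes ww^T)\bigr]=\sum_i z_i^2\,w^TA_iw ,
\]
whose maximum over unit $z,w$ is $\max_i\lambda_{\max}(A_i)$, which is just an eigenvalue computation. Optimizing this $C$ over the real separable body is therefore easy, and your second step would transfer no hardness. Likewise, $\max_{\|w\|=1}\sum_i(w^TA_iw)^2$ is not a support function of $\{(w^TA_iw)_i\}$. It is the square of the largest value of that support function over unit directions.

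The target quantity is linear in $z$, not in the $z_i^2$. The standard remedy is a coupling coordinate: on $\R^{k+1}\otimes\R^r$ take $C=\sum_{i=1}^k(e_0e_i^T+e_ie_0^T)\otimes A_i\in\Sym_{k+1}(\R)\otimes\Sym_r(\R)$. Then
\[
\Tr[C(zz^T\otimes ww^T)]=2z_0\sum_{i\ge1}z_i\,w^TA_iw\le\bigl(\sum_i(w^TA_iw)^2\bigr)^{1/2},
\]
with equality at $z_0=1/\sqrt2$ and $(z_1,\ldots,z_k)$ parallel to $(w^TA_iw)_i$. So the maximum over real product states is $\bigl(\max_{\|w\|=1}\sum_i(w^TA_iw)^2\bigr)^{1/2}$, and RSDF hardness does transfer. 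This extra coordinate is also why $p$ exceeds the number of matrices $A_i$, which your dimension count leaves vague.

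With that repair the rest of your outline holds:
\begin{itemize}
\item Your inner-ball construction via $I\otimes I\pm\epsilon S\otimes T=\tfrac12[(I\pm\epsilon S)\otimes(I+T)+(I\mp\epsilon S)\otimes(I-T)]$ does give an inverse-polynomial ball in the affine hull. The paper, in its approximate-attainability appendix, instead intersects Gurvits's complex-separable ball around $I/(pr)$ with $\Sym_p(\R)\otimes\Sym_r(\R)$.
\item The ``take the real part'' step is unnecessary, because RSDF is real from the outset.
\item The worry about accuracy loss is moot for the exact problem, since an exact oracle answers weak membership correctly for every tolerance $\delta$ of polynomial bit length.
\end{itemize}
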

Above $\Sym_{x}(\R)$ denotes the real vector space of $x\times x$ real symmetric matrices.

\section{Main Results}
We are now in a position to give a brief outline of our main result. Our aim is to show that the decision problem in Definition~\ref{def:target} is \np-hard. We do this by showing that if one could construct an algorithm to decide this problem efficiently, then we could also solve a known \np-hard problem efficiently, as depicted in Fig.~\ref{figschematic}. Given a real positive semidefinite matrix $H\in\Sym_p(\R)\otimes\Sym_r(\R)$, with $\Tr[H]=1$, Theorem~\ref{thm:gurvits} tells us that deciding if $H\in\Sep^{\R}_{p,r}$ is \np-hard. Now assume we can efficiently construct a multiparameter estimation model given $H$, i.e.~a family of $\rho$, $\rho_i$, such that deciding if $H\in\Sep^{\R}_{p,r}$ is equivalent to deciding whether a valid extension of the corresponding SLD operators $L_i$ exists such that the SLD operators commute. The existence of such a model would prove that deciding whether the QCRB is attainable or not is \np-hard. We now present an explicit multiparameter estimation model that does exactly this:
\begin{theorem}[Quantum encoding]\label{th:quantum-encoding}
Given a rational positive semidefinite $H\in\Sym_p(\R)\otimes\Sym_r(\R)$,  one can construct, in polynomial time, rational local data $(\rho_0,\rho_1,\ldots,\rho_p)$ of a  real constant-rank model with $\mathcal{J}_\text{Q}\succ0$ such that $H\in\Sep^{\R}_{p,r}$ if and only if some finite extension admits commuting SLD representatives. 
\end{theorem}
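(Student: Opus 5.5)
The plan is to turn the existence of commuting SLD extensions into a separability statement about a block operator built from the prescribed blocks of the $L_i$. I would then choose the model so that this operator is, up to a trivial summand, exactly $H$.

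\textbf{Step 1 (reformulation).} Take a real model with support $S\cong\R^r$ and kernel $K\cong\R^m$, and set $\rho_0=\tfrac1r I_r\oplus 0_m$. For this $\rho_0$ the SLD equations fix the blocks of each $L_i$ as
\[
L_i=\begin{pmatrix}A_i & B_i\\ B_i^T & *\end{pmatrix}.
\]
Here $A_i=r\,P\rho_iP$ and $B_i=2r\,P\rho_iQ$, with $P,Q$ the projectors onto $S$ and $K$. Suppose a finite extension admits commuting (real symmetric) representatives. Then they are simultaneously diagonalizable, $L_i=\sum_k\lambda_{ik}v_kv_k^T$ with $\sum_k v_kv_k^T=I$. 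Write $a_k=Pv_k$ and $z_k=(1,\lambda_{1k},\dots,\lambda_{pk})\in\R^{p+1}$. Compressing $I$, $L_i$ and $L_iL_j$ to $S$ gives
\[
G:=\sum_k z_kz_k^T\otimes a_ka_k^T,
\]
with blocks $G_{00}=I_r$, $G_{0i}=A_i$ and $G_{ij}=A_iA_j+B_iB_j^T$. So $G\in\Sep^{\R}_{p+1,r}$.

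Conversely, suppose $G$ has a real separable decomposition. Terms with $z_0\neq0$ can be rescaled to $z_0=1$; terms with $z_0=0$ must have zero contribution to the $00$ block. From such a decomposition I would build explicit commuting $L_i$ on an enlarged space. I would take vectors $v_k=a_k\oplus b_k$, where the $b_k$ are chosen in an ancilla-extended kernel so that the $v_k$ form an orthonormal basis and $\sum_k\lambda_{ik}a_kb_k^T=B_i$. This is a Naimark-type dilation: the Gram data $\sum\lambda_{ik}\lambda_{jk}a_ka_k^T$ matching $A_iA_j+B_iB_j^T$ is what makes the required $b_k$ exist after embedding $K$ isometrically into a larger space. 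This converse construction is the step I expect to be the main obstacle. The delicate points are handling the $z_0=0$ terms (which correspond to eigenvectors lying purely in the ancilla-extended kernel) and checking that the dilation matches $B_i$ exactly rather than only up to an isometry. That isometry freedom is exactly what the finite extension permits.

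\textbf{Step 2 (encoding $H$).} I would choose $A_i=0$, i.e.\ $P\rho_iP=0$. I would then choose the $B_i\in\R^{r\times m}$ so that the $p\times p$ block matrix $(B_iB_j^T)_{ij}$, viewed as an element of $\Sym_p\otimes\Sym_r$, equals $H$. This is a Gram factorization $H=\mathcal B\mathcal B^T$ with $\mathcal B$ the stacked $B_i$.

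Rationality is obtained as follows. A rational $LDL^T$ decomposition gives $H=LDL^T$ with $D$ diagonal, nonnegative and rational. Write each $d_\ell$ as a sum of four rational squares (Lagrange), and enlarge $m$ by a factor of $4$. This gives a rational $\mathcal B$, computable in polynomial time.

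With $A_i=0$ we get $G=e_0e_0^T\otimes I_r+\sum_{i,j}e_ie_j^T\otimes B_iB_j^T$, i.e.\ $G=e_0e_0^T\otimes I_r\oplus H$ with vanishing cross blocks.

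\textbf{Step 3 (equivalence).} If $G$ is separable, then compressing by $(I-e_0e_0^T)\otimes I$ preserves separability, so $H\in\Sep^{\R}_{p,r}$.

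Conversely, suppose $H=\sum_k z_kz_k^T\otimes w_kw_k^T$. Use the paired vectors $(t,\pm z_k)\otimes w_k$; the cross terms cancel, leaving $H$ plus $2t^2\sum_k w_kw_k^T$ in the $00$ block. Choose $t$ small so that this is $\preceq I_r$. The deficit $I_r-2t^2\sum_k w_kw_k^T$ is PSD, so it is a sum of terms $e_0e_0^T\otimes uu^T$, which completes a separable decomposition of $G$ in which every term has $z_0\neq0$. Step 1 then yields commuting extended SLDs.

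\textbf{Step 4 (remaining requirements).} By construction $\mathcal J_{Q,ij}=\Tr[\rho_0L_iL_j]=\tfrac1r\Tr[B_iB_j^T]$, which is $\tfrac1r$ times the partial trace of $H$ over the $r$ factor. If this is singular, $H$ is supported on $V\otimes\R^r$ for a proper subspace $V\subset\R^p$. In that case I would first restrict to $V$, with a rational basis, reducing $p$; separability is unaffected. After this, $\mathcal J_Q\succ0$.

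The data $\rho_i=\tfrac1{2r}\begin{pmatrix}0&B_i\\B_i^T&0\end{pmatrix}$ are rational and traceless, and the SLD equations are solvable. The model is real, and its rank is constant to first order because every $\rho_i$ is off-diagonal (so $P\rho_iP=0$). Combined with Theorem~\ref{theorem:finitecriterion}, this proves the claimed equivalence.
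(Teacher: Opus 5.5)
The core of your argument is sound and follows the paper's route. It differs from the paper in how $\mathcal{J}_\text{Q}\succ0$ is obtained, and it leaves two points unproved (the constant-rank model, and deterministic computability of the factorization).

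\textbf{Steps 1--3.} These are the paper's proof of Theorem~\ref{thm:sep} in different packaging. A real separable decomposition of $G=e_0e_0^T\otimes I_r\oplus H$ with every $z_0\neq0$ is exactly a solution of the paper's conditions \eqref{eqn:c-cond}. Your $(t,\pm z_k)$ pairing plus the deficit terms $e_0e_0^T\otimes uu^T$ is the paper's $(\pm z_i,w_i/\sqrt2)$ pairing plus its $(\mathbf 0,q_k)$ pairs. You need weight $\tfrac12$ on each paired term, otherwise you get $2H$.

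The converse of Step 1, which you flag as the main obstacle, closes as in the paper:
\begin{itemize}
\item Put the $a_k^T$ as rows of an $N\times r$ matrix $F$, so $F^TF=I_r$, complete it to an orthogonal $[F\ W]$, and set $D_i=\operatorname{diag}(\lambda_{ik})_k$.
\item Then $\widetilde B_i:=F^TD_iW$ satisfies $\widetilde B_i\widetilde B_j^T=F^TD_iD_jF-A_iA_j=B_iB_j^T$.
\item Pad with trivial terms $a_k=0$ until $N-r\ge m$. There is then an orthogonal $U$ with $F^TD_iWU=[B_i\ 0]$ for all $i$.
\item The matrices $L_i:=[F\ WU]^TD_i[F\ WU]$ are the commuting extended SLDs.
\end{itemize}
Step 3 only produces decompositions with $z_0\neq0$, so the $z_0=0$ difficulty never arises.

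In the forward direction you should not assume the commuting representatives are real. Theorem~\ref{theorem:finitecriterion} permits complex Hermitian kernel blocks, so the $a_k$ may be complex. You need the paper's real-part step: $G$ is real, so $G=\sum_k z_kz_k^T\otimes\operatorname{Re}(a_ka_k^\dagger)$, with $\operatorname{Re}(a_ka_k^\dagger)=y_ky_k^T+x_kx_k^T$ for $a_k=y_k+ix_k$.

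\textbf{How you force $\mathcal{J}_\text{Q}\succ0$.} This is the genuinely different choice. The paper adjoins an auxiliary support block $\mathcal T$ carrying commuting traceless $T_i$. This leaves the commutation problem untouched, keeps all $p$ parameters, and gives $\mathcal{J}_\text{Q}=\frac1{r'}\bigl(I+\mathbf 1\mathbf 1^T+(\Tr[B_iB_j^T])_{ij}\bigr)$ for every input. Hence it has the uniform bounds $\frac1{r'}I\preceq\mathcal{J}_\text{Q}\preceq I$ (for $\Tr[H]=1$), which are reused in the gap appendix.

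You keep only $\mathcal S$, so $\mathcal{J}_\text{Q}$ is $\frac1r$ times the partial trace of $H$ over the $\R^r$ factor. When it is singular you compress $H$ to $V\otimes\R^r$ with $V=\{c:\sum_ic_iB_i=0\}^\perp$. This is sound:
\begin{itemize}
\item A null vector $c$ gives $(c^T\otimes I)H(c\otimes I)=0$, hence $H(c\otimes I)=0$.
\item Every term of a separable decomposition is $\preceq H$, so its $z$-factor lies in $V$.
\item A rational congruence onto a basis of $V$ therefore preserves separability in both directions and leaves $\mathcal{J}_\text{Q}\succ0$.
\end{itemize}
The cost is a preprocessing step and a change from $p$ to some $p'\le p$ parameters, so the output is not literally $(\rho_0,\ldots,\rho_p)$. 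You also get no polynomial lower bound on $\lambda_{\min}(\mathcal{J}_\text{Q})$, so the stability estimates of the gap appendix would not carry over directly. The benefit is a leaner model without $\mathcal T$.

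\textbf{Two points that need repair.}

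First, the constant-rank claim in Step 4 is not established. ``Rank constant to first order because $P\rho_iP=0$'' does not produce a model; the relevant block is $Q\rho_iQ=0$. Moreover, the obvious family $\rho_0+\sum_i\theta_i\rho_i$ is not even positive semidefinite for $\theta\neq0$: its kernel block vanishes while its support--kernel block $\frac1{2r}\sum_i\theta_iB_i$ does not. You need an actual smooth family, for example the paper's $\rho_\theta=e^{X(\theta)/2}\rho_0e^{X(\theta)/2}$ with $X(\theta)=\sum_i\theta_iL_i^{(0)}-\psi(\theta)I$. It is real, it has rank exactly $\operatorname{rank}\rho_0$ because $e^{X/2}$ is invertible, and it has $\partial_i\rho_\theta|_{\theta=0}=\frac12\{L_i^{(0)},\rho_0\}=\rho_i$ because $\Tr[\rho_0L_i^{(0)}]=0$.

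Second, Lagrange four-square representations are an awkward choice for a deterministic polynomial-time reduction, since the standard efficient algorithms for computing them are randomized. The paper instead writes $d=ab/b^2$ and expands $ab$ in binary as a sum of $O(\log ab)$ squares. This is trivially deterministic and costs only a logarithmic factor in the number of columns.
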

This will be proved in Appendix~\ref{apen:quantummodel}. Here, polynomial time means polynomial in $p$, $r$, and the maximum bit length of the rational entries of $H$. For completeness, we state the hard model explicitly here. Choose a rational factorization $H=BB^T$ of $H$, for a rational rectangular matrix $B$ whose number of columns is polynomial in the length of the input $H$ (Lemma~\ref{lem:bbt}). Partition $B$ into row-blocks as $B=\left(\begin{smallmatrix}B_1\\\vdots\\B_p\end{smallmatrix}\right)$. Let $\mathcal T\cong\C^{p+1}$ have basis $|0\rangle,\ldots,|p\rangle$ and define
\begin{equation}
  T_i=|i\rangle\langle i|-|0\rangle\langle0|,
  \qquad 1\le i\le p.
\end{equation}
Let $\mathcal S\cong\C^r$, which will encode the \np-hard problem (i.e. it carries the information about the input $H$), and let $\mathcal{N}\cong\C^q$ contain the freedom of the SLD operators (i.e. the kernel of $\rho$). On $\mathcal H=(\mathcal T\oplus\mathcal S)\oplus \mathcal{N}$, put $r'=p+1+r$ and define
\begin{equation}
  L_i^{(0)}=
  \begin{pmatrix}
    T_i\oplus0_{\mathcal S}&\binom{0}{B_i}\\[2mm]
    (\,0\ \ B_i^T\,)&0_\mathcal{N}
  \end{pmatrix}\;.
\end{equation}
Defining $X(\theta)=\sum_i\theta_iL_i^{(0)}-\psi(\theta)I$, where $\psi(\theta)$ is a normalisation factor, we arrive at the smooth quantum model that is used in Appendix~\ref{apen:quantummodel} to prove Theorem~\ref{th:quantum-encoding},
\begin{equation}
  \rho_\theta=
\mathrm{e}^{\frac12X(\theta)}\rho_0\mathrm{e}^{\frac12X(\theta)}\;,
\end{equation}
with $\rho_0=(I_{r'}\oplus 0_{q})/r'$.

We can now state our main result:
\begin{theorem}[Deciding the attainability of the QCRB is \np-hard]\label{th:nphard}
Solving the decision problem in Definition~\ref{def:target}, i.e. deciding whether $V=\mathcal{J}_\text{Q}^{-1}$ or not, is \np-hard under polynomial-time Turing reductions.
\end{theorem}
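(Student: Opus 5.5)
The proof is a chain of reductions. Theorem~\ref{thm:gurvits} supplies the \np-hard source problem, Theorem~\ref{th:quantum-encoding} supplies the polynomial-time map, and Theorem~\ref{theorem:finitecriterion} converts ``commuting SLD representatives'' into ``$V=\mathcal{J}_\text{Q}^{-1}$ with a finite-outcome POVM.'' Suppose we had an oracle for the decision problem of Definition~\ref{def:target}. We must show that this oracle, together with polynomial-time preprocessing, decides real-separability of any rational $Y\succeq0$, $Y\in\Sym_p(\R)\otimes\Sym_r(\R)$, $\Tr[Y]=1$, with $p=r(r-1)/2+2$. Composing with the Turing reduction of Theorem~\ref{thm:gurvits} then gives \np-hardness under polynomial-time Turing reductions.

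The steps are as follows. Given such a $Y$, set $H=Y$ and run the construction of Theorem~\ref{th:quantum-encoding}. This computes the rational factorization $H=BB^T$ (Lemma~\ref{lem:bbt}), the blocks $B_i$, and the operators $L_i^{(0)}$. It then outputs the local data $\rho_0=(I_{r'}\oplus0_q)/r'$ and $\rho_i=\partial_i\rho_\theta|_{\theta=0}$.

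At $\theta=0$ we have $X(0)=0$, and we may take $\psi$ smooth with $\partial_i\psi(0)=\Tr[\rho_0L_i^{(0)}]$, which vanishes by construction. Then $\rho_i=\tfrac12(L_i^{(0)}\rho_0+\rho_0L_i^{(0)})$, which is explicit and rational with polynomial bit length, so no exponentials ever need to be evaluated. We then check that this tuple is a legal instance of Definition~\ref{def:target}:
\begin{itemize}
\item $\rho_0$ is a rational density matrix.
\item $\rho_i$ is Hermitian and traceless, since $\Tr[\rho_0 L_i^{(0)}]=\Tr[T_i]/r'=0$.
\item The SLD equations are solvable, by construction with $L_i=L_i^{(0)}$.
\item $\mathcal{J}_\text{Q}\succ0$. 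This is part of the conclusion of Theorem~\ref{th:quantum-encoding}; directly, $\mathcal{J}_{\text{Q},ij}=\tfrac1{r'}\Tr[\Pi L_iL_j]$ with $\Pi$ the support projector, and the $T_i$ block alone gives $\tfrac1{r'}(\delta_{ij}+1)$, which is positive definite. The $B$ contributions add a PSD Gram matrix.
\end{itemize}

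Correctness is then a chain of equivalences. By Theorem~\ref{th:quantum-encoding}, $H\in\Sep^{\R}_{p,r}$ if and only if some finite extension admits commuting SLD representatives. By Theorem~\ref{theorem:finitecriterion}, which applies since $\mathcal{J}_\text{Q}\succ0$, the latter holds if and only if a finite-outcome locally unbiased POVM and estimator achieve $V=\mathcal{J}_\text{Q}^{-1}$. That is exactly the yes-condition of Definition~\ref{def:target}, so one oracle call answers the separability instance. The real work is concentrated in Theorem~\ref{th:quantum-encoding}, which we are permitted to assume here.

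The main obstacle I expect is not the logic but making sure the instance is genuinely in the format Definition~\ref{def:target} requires, namely rational and of polynomial size. This needs two things. First, the factorization $H=BB^T$ must be rational with polynomially many columns and polynomial bit length. Lemma~\ref{lem:bbt} provides this, e.g.\ via an $LDL^T$-type decomposition with the diagonal absorbed through a sum of squares of rationals, though possibly at the cost of more columns, so $q$ remains polynomial. Second, the derivative data must not depend on the transcendental $\mathrm{e}^{X/2}$, which the first-order computation above confirms. A final small point is that the trace normalisation of $H$ is harmless, since $\Sep^{\R}_{p,r}$ is a cone.
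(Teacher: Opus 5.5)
Your proposal is correct and follows essentially the same route as the paper: set $H=Y$, build the model of Theorem~\ref{th:quantum-encoding}, and use Theorem~\ref{theorem:finitecriterion} to turn a single oracle answer on attainability into the answer to the separability question. Your explicit checks that the resulting tuple is a legal instance of Definition~\ref{def:target} are correct. These are the rationality and tracelessness of $\rho_i=\tfrac12(L_i^{(0)}\rho_0+\rho_0L_i^{(0)})$ (using $\partial_i\psi(0)=0$) and $\mathcal{J}_\text{Q}\succ0$; they make explicit what the paper folds into Theorem~\ref{th:quantum-encoding} and Proposition~\ref{lem:quantum-model}.
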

\begin{proof}
We reduce from the real separability problem of Theorem~\ref{thm:gurvits}.
Suppose we have an oracle $O_\text{att}$ that correctly decides whether $V=\mathcal{J}_\text{Q}^{-1}$ given any rational quantum multiparameter estimation problem satisfying Definition~\ref{def:target}. We show that this oracle can be used to decide if $Y\in\Sep^{\R}_{p,r}$ in Theorem~\ref{thm:gurvits}. The input is a rational query: a rational matrix $Y\in\Sym_{p}(\R)\otimes\Sym_r(\R)$ with $\Tr[Y]=1$ and $Y\succeq0$.

Set $H=Y$, and construct the model in Theorem~\ref{th:quantum-encoding}, which can be done in polynomial time. Theorem~\ref{theorem:finitecriterion} then states that the compatibility of the corresponding SLD operators is equivalent to  $V=\mathcal{J}_\text{Q}^{-1}$, which can be decided by the oracle $O_\text{att}$. If the oracle answers yes, then Theorem~\ref{theorem:finitecriterion} guarantees a commuting SLD extension exists, and then Theorem~\ref{th:quantum-encoding} implies $Y\in\Sep^{\R}_{p,r}$ and so we can give a positive answer to the separable query.

On the other hand, if the oracle answers no, then again using Theorems~\ref{theorem:finitecriterion} and \ref{th:quantum-encoding}, we see we must have $Y\not\in\Sep_{p,r}^\R$.
Thus we can output the negative answer.

Therefore, we have a polynomial-time reduction from the known \np-hard separability problem in Theorem~\ref{thm:gurvits} to the QCRB attainability problem in Definition~\ref{def:target}, and so we conclude the latter is also \np-hard.
\end{proof}
This result leads us to the following corollary
\begin{corollary}
Assuming $\p\neq\np$, there does not exist a polynomial time algorithm capable of deciding the attainability of the QCRB.
\end{corollary}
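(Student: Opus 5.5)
The corollary is a standard consequence of Theorem~\ref{th:nphard}, and I would prove it by contraposition. Suppose there were a deterministic algorithm $\mathcal{A}$ that, given any rational input $(\rho_0,\rho_1,\ldots,\rho_p)$ as in Definition~\ref{def:target}, decides in time polynomial in the total bit length of the input whether some finite-outcome locally unbiased POVM and estimator achieve $V=\mathcal{J}_\text{Q}^{-1}$. The goal is to derive $\p=\np$.

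\textbf{Step 1: simulate the oracle.} Take the reduction in the proof of Theorem~\ref{th:nphard}. Given a rational $Y\succeq0$ in $\Sym_p(\R)\otimes\Sym_r(\R)$ with $\Tr[Y]=1$, we set $H=Y$ and construct the model of Theorem~\ref{th:quantum-encoding} in polynomial time. We then answer the oracle query $O_\text{att}$ by running $\mathcal{A}$.

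Two points need checking. First, the constructed instance must have bit length polynomial in that of $Y$. This holds because the construction runs in polynomial time, so its output is polynomially bounded; in particular the rational factor $B$ from Lemma~\ref{lem:bbt} has polynomially many columns and entries of polynomial bit length. Second, the instance must be a legal input to Definition~\ref{def:target}: solvable SLD equations and $\mathcal{J}_\text{Q}\succ0$. Theorem~\ref{th:quantum-encoding} guarantees this.

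Since a composition of polynomial-time procedures is polynomial, this decides real separability exactly in polynomial time.

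\textbf{Step 2: handle Turing reductions.} Theorem~\ref{thm:gurvits} gives \np-hardness only under polynomial-time Turing reductions. So for every language $L\in\np$ there is a polynomial-time oracle machine that decides $L$ using a separability oracle. Each oracle call can be replaced by the procedure of Step 1.

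The one delicate point is the sizes of the queries. A polynomial-time machine can only write queries of polynomial length, so each simulated call costs polynomial time in the original input length. There are at most polynomially many calls, so the total running time is polynomial. Hence $L\in\p$, so $\np\subseteq\p$ and therefore $\p=\np$, contradicting the assumption.

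\textbf{Where the work lies.} I do not expect a genuine obstacle. The only things to verify are the bit-length bookkeeping in Step 1 and the polynomial composition argument for Turing reductions in Step 2. Both are routine given that Theorems~\ref{th:quantum-encoding} and \ref{th:nphard} are stated with polynomial-time guarantees.
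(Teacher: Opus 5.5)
Your proposal is correct and takes the same approach as the paper. The paper treats the corollary as immediate from Theorem~\ref{th:nphard}, and your contrapositive argument (compose the polynomial-time encoding of Theorem~\ref{th:quantum-encoding} with the hypothetical algorithm, substitute it for each oracle call of the Turing reduction, and bound query lengths and the number of calls) is the standard reasoning the paper leaves implicit; as the paper also notes, any non-positive-semidefinite query can be rejected in polynomial time before the model is built.
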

This result therefore immediately implies that open problem number 3 in Ref.~\cite{horodecki2022five} cannot be solved in an efficient manner unless $\p=\np$. It is also noteworthy that this result provides an interesting connection between entanglement witnesses and quantum metrology, beyond what is already known~\cite{hyllus2012fisher}.

The result above establishes that deciding whether the QCRB is exactly attainable is \np-hard, but by itself it gives no quantitative control over the NO instances. In particular, the reduction could in principle produce non-attainable models for which locally unbiased estimators can approach the QCRB arbitrarily closely. To address this, in Appendix~\ref{apen:approxattain}, we show that the hardness is robust: it remains \np-hard to distinguish models for which the QCRB is exactly attainable from models for which every locally unbiased estimator satisfies
\begin{equation}
\label{eq:polypr}
\Tr\left[V-\mathcal{J}_{\mathrm Q}^{-1}\right]
\geq \frac{1}{\operatorname{poly}(p,r)}.
\end{equation}
Moreover, for the models produced by our reduction for the promise problem of distinguishing exact attainability from Eq.~\eqref{eq:polypr}, every exactly attainable instance admits a certificate of polynomial size whose validity can be checked in polynomial time.

We note that the restriction to real rational models does not limit the validity of our results. Any algorithm capable of solving the decision problem of whether $V=\mathcal{J}_\text{Q}^{-1}$ for arbitrary inputs would also solve the hard rational model introduced here. The decision problem in Definition~\ref{def:target} naturally provides a relationship between QCRB attainability and the feasibility of systems of quadratic equations over the reals, see Appendix~\ref{apen:existence} and Ref.~\cite{schaefer2026existential}. Additionally, the restriction to POVMs with finite outcomes does not affect the generality of our results, see Appendix~\ref{sec:finite-support}. Finally, note that in Appendix~\ref{apen:NHCRHCRB} we prove that deciding whether there exists a POVM and estimator such that $\Tr[V]=\mathcal{C}_\text{NH}$ or $\Tr[V]=\mathcal{C}_\text{H}$ is also \np-hard.

\section{Discussion}
It has been a long-standing open question in quantum information to present conditions under which $V=\mathcal{J}_\text{Q}^{-1}$~\cite{horodecki2022five}. Indeed, this is still a very active research direction~\cite{imai2026hierarchy}. We have proven that no such efficiently computable condition exists unless $\p=\np$. As such, a complete answer to open problem 3 in Ref.~\cite{horodecki2022five} is unlikely to exist beyond that already presented by Suzuki, Yang and Hayashi~\cite{Suzuki2020}. This result is consistent with the decades of progress since Helstrom introduced his bound~\cite{helstrom1967minimum,helstrom1968minimum}: many efficiently computable lower bounds to $\mathcal{C}_\text{MI}$ have been introduced, however all known bounds merely approximate $\mathcal{C}_\text{MI}$~\cite{yuen1973,holevo1973statistical,nagaoka2005generalization,nagaoka2005new,hayashi1999,conlon2021efficient}.

One may wonder whether our result follows directly from existing literature. It does not. Suzuki, Yang, and Hayashi proved that the QCRB is attainable if and only if the SLD operators admit mutually commuting extensions on a sufficiently large Hilbert space~\cite{Suzuki2020}. Their condition is exact, but does not determine the computational complexity of deciding whether the unconstrained kernel blocks can be chosen to make the SLD operators commute. Similarly, Hayashi and Ouyang formulated the exact single-copy estimation problem as a conic program over the separable cone~\cite{hayashi2023tight}. Although general membership and optimization problems over the separable cone are \np-hard, this fact alone does not imply hardness of QCRB attainability, because the conic instances generated by valid quantum statistical models form a highly structured subset of arbitrary separability instances. The missing ingredient is a reduction in the opposite direction, showing that arbitrary hard separability instances can be encoded into valid quantum models with polynomial overhead. Our construction provides precisely this encoding and thereby establishes the \np-hardness of the physical attainability problem itself.

We note that for a fixed number of unknown parameters Hayashi and Ouyang have recently presented an algorithm for computing $\mathcal{C}_\text{MI}$ to within error $\epsilon$~\cite{hayashi2023tight}. The runtime of this algorithm contains an $\epsilon$-dependence of $\epsilon^{-p}$.\footnote{Therefore, given a required number of bits of precision $b=\log_2(1/\epsilon)$, the runtime carries a dependence of $2^{bp}$.} Note however, that such an algorithm cannot be used to determine whether $V=\mathcal{J}_\text{Q}^{-1}$. For example if one had determined via this algorithm that $\Tr[V-\mathcal{J}_\text{Q}^{-1}]<\epsilon$, we cannot conclude that $V=\mathcal{J}_\text{Q}^{-1}$. Our results demonstrate that this is a fundamental limitation and not a limitation of their chosen algorithm. Their algorithm can be used to solve the approximate attainability problem discussed in Appendix~\ref{apen:approxattain}. It is worth noting that efficient iterative methods for
searching over measurement strategies are available~\cite{kimizu2024adaptive,zhang2024qestoptpovm}, although they do
not generally certify convergence to a global optimum.

Although this work closes a major open problem in quantum multiparameter estimation, there remain many interesting avenues along which we may hope for parallel advances. In the light of our results it becomes particularly important to search for multiparameter estimation models where it is efficiently computable to determine whether $V=\mathcal{J}_\text{Q}^{-1}$ or not. Another important avenue is to extend our results to global estimation, i.e. does there exist a multiparameter estimation model where it is hard to decide the attainability of the QCRB for all possible values of the unknown parameters?

More broadly, the quantum Fisher information has become a unifying quantity across quantum information theory. Notably, it has relevance for connecting statistical distinguishability and information geometry~\cite{braunstein1994statistical}, multipartite entanglement detection~\cite{hyllus2012fisher,hauke2016measuring}, quantum criticality~\cite{zanardi2008quantum}, quantum speed limits~\cite{taddei2013quantum}, and variational quantum algorithms~\cite{meyer2021fisher}. Our result exposes a fundamental computational limitation: although this information-geometric benchmark is efficiently computable, deciding whether its ultimate precision can be physically attained is \np-hard.

\begin{acknowledgments}
ChatGPT 5.6 Sol was used in the proof of the main theorem of this work. Axiomatic AI’s Lemma was used to develop examples and validate proofs. All proofs were checked by the authors.

L.O.C. and J.M.T. were supported in part by the National Science Foundation under Award No. 2533041 (NQVL:Design:ORAQL). L.O.C., L.S., and A.V.G.~were supported in part by ONR MURI, AFOSR MURI, NSF QLCI (award No.~OMA-2120757),  NSF STAQ program, DoE ASCR Quantum Testbed Pathfinder program (award No.~DE-SC0024220),  ARL (W911NF-24-2-0107), DARPA SAVaNT ADVENT, and NQVL:QSTD:Design:FTL. L.O.C., L.S., and A.V.G.~also acknowledge support from the U.S.~Department of Energy, Office of Science, National Quantum Information Science Research Centers, Quantum Systems Accelerator (award No.~DE-SCL0000121) and from the U.S.~Department of Energy, Office of Science, Accelerated Research in Quantum Computing, Fundamental Algorithmic Research toward Quantum Utility (FAR-Qu). V.V. is supported by the National Research Foundation, Singapore, through the National Quantum Office, hosted in A*STAR, under its National Quantum Scholarship Scheme (NQSS) Funding Initiative (PhD). J.S. was partly supported by JSPS KAKENHI Grant Numbers JP24K14816 and ERATO ``Super Quantum Entanglement" (Grant No. JPMJER2402) from JST. K.B.~was supported by a Hartree Fellowship from the Joint Center for Quantum Information and Computer Science (QuICS) at the University of Maryland, College Park.

\end{acknowledgments}

\bibliography{references}
\bibliographystyle{naturemag}

\clearpage
\appendix

\onecolumngrid
\makeatletter
\onecolumn@grid@setup
\let\set@footnotewidth\set@footnotewidth@one
\let\compose@footnotes\compose@footnotes@one
\makeatother

\numberwithin{theorem}{section}

\begin{center}
\textbf{\large Supplemental Material}
\end{center}

Here we provide the full proof details of Theorem~\ref{th:nphard} as well as other information that may be useful. We begin by discussing the extended SLD operators in Appendix~\ref{apen:extendedSLDs}. In Appendix~\ref{apen:Gurvits} we present the proof of Theorem~\ref{thm:gurvits}, followed by the proof of Theorem~\ref{th:quantum-encoding} in Appendix~\ref{apen:quantummodel}. In Appendices~\ref{apen:rationaldecomp}, \ref{apen:separability}, and \ref{apen:realization} we present three results that are used in the proof of Theorem~\ref{th:quantum-encoding} in Appendix~\ref{apen:quantummodel}.

\addtocontents{toc}{\protect\supplementtocstart}
\begingroup
\makeatletter
\let\savedl@section\l@section
\let\l@section\@gobbletwo
\let\l@subsection\@gobbletwo 
\let\l@subsubsection\@gobbletwo 
\renewcommand{\supplementtocstart}{%
  \let\l@section\savedl@section}
\tableofcontents
\makeatother
\endgroup

\section{Extended SLD operators}
\label{apen:extendedSLDs}
In this appendix we describe how the kernel of $\rho$ can be extended without affecting the QFI or QCRB. This extension provides greater freedom when choosing the kernel elements of the SLD operators and is relevant to Theorem~\ref{theorem:finitecriterion}. As in the main text, we consider a $D$ dimensional Hilbert space $\mathcal{H}$. Given a density matrix $\rho_0$, we partition $\mathcal{H}$ as $\mathcal{H}=\mathcal{H}_\text{S}\oplus\mathcal{H}_\text{K}$ where $\mathcal{H}_\text{S}$ is the support of $\rho_0$ and $\mathcal{H}_\text{K}$ is the kernel of $\rho_0$~\cite{conlon2025role}. Working in this basis we can write $\rho_0$ as
\begin{equation}
    \rho_0=\begin{pmatrix}
        \rho_{0,\text{S}}&0\\
        0&0
    \end{pmatrix}\;,
\end{equation}
where $\rho_{0,\text{S}}\succ0$, and its derivatives as
\begin{equation}
  \rho_i=\begin{pmatrix}
        D_i&E_i\\E_i^\dagger&0
    \end{pmatrix}\;.
\end{equation}
A general SLD operator can be written as
\begin{equation}
    L_i=\begin{pmatrix}
        L_{\text{S},i}&L_{\text{SK},i}\\
        L_{\text{SK},i}^\dagger &L_{\text{K},i}
    \end{pmatrix}\;.
\end{equation}
Therefore, the equation $\rho_i=\frac12\left(L_i\rho+\rho L_i\right)\;$ becomes
\begin{equation}
    2\begin{pmatrix}
        D_i&E_i\\E_i^\dagger&0
    \end{pmatrix}=\begin{pmatrix}
        \rho_{0,\text{S}}L_{\text{S},i}&\rho_{0,\text{S}}L_{\text{SK},i}\\
     0 &0
    \end{pmatrix}+\begin{pmatrix}
        L_{\text{S},i}\rho_{0,\text{S}}&0\\
        L_{\text{SK},i}^\dagger\rho_{0,\text{S}} &0
    \end{pmatrix}\;.
\end{equation}
As such, we see that $L_{\text{S},i}$ and $L_{\text{SK},i}$ are completely defined by the SLD equation, but $L_{\text{K},i}$ is unspecified.

Observe also that changing $L_{\text{K},i}$ does not affect the QFI as
\begin{equation}
    \mathcal{J}_{\text{Q},ij}=\frac12\Tr[\rho(L_iL_j+L_jL_i)]=\frac12\Tr\left[\rho_{0,\text{S}}\left(\{L_{\text{S},i},L_{\text{S},j}\}+L_{\text{SK},i}L^\dagger_{\text{SK},j}+L_{\text{SK},j}L^\dagger_{\text{SK},i}\right)\right]\;.
\end{equation}

Additionally note that we can consider a finite zero-weight extension that does not change any of the above arguments but increases the dimension of the kernel space:
\begin{equation}
    \tilde{\mathcal{H}}=\mathcal{H}\oplus \mathcal{A}\;,\tilde{\rho}_0=\rho_0\oplus 0_{\mathcal{A}}\;,\tilde{\rho}_i=\rho_i\oplus 0_{\mathcal{A}}.
\end{equation}
This extended kernel space provides additional degrees of freedom when choosing $L_{\text{K},i}$. This in turn can help when choosing $L_{\text{K},i}$ such that the SLD operators commute, as in Theorem~\ref{theorem:finitecriterion}. Indeed, it is known that for certain problems this extended space is necessary for choosing commuting SLD operators~\cite{conlon2025role}.

\section{Extension of Gurvits (Theorem~\ref{thm:gurvits})}
\label{apen:Gurvits}
In this appendix we describe in more detail the source problem for our \np-hardness proof introduced in section~\ref{subsec:source} (Theorem~\ref{thm:gurvits}). We consider the set
\begin{align}
  \mathcal K^{\R}_{p,r}
  &=\operatorname{conv}\{zz^T\otimes ww^T:\label{eq:sepbase} z\in\R^p,w\in\R^r,\norm{z}_2=\norm{w}_2=1\}\;,
\end{align}
where $\text{conv}\{S\}$ denotes the convex hull of $S$, i.e.~all finite convex combinations of elements of $S$. Eq.~\eqref{eq:sepbase} is the normalised base of the real separable cone in Eq.~\eqref{eq:sepcone}. We now consider weak membership of this cone.

Weak membership of a set $K$ is a promise problem\footnote{A promise problem can be viewed as a generalization of a decision problem which only requires correct output on a certain set of inputs (the ``promised'' inputs). In this case, one does not need to determine membership of points near the boundary of $K$.} which asks: given a rational matrix $Y$, and a rational number $\delta>0$, either:
\begin{enumerate}
    \item Assert that $Y\in S(K,\delta)$, or
    \item Assert that $Y\notin S(K,-\delta)$\;,
\end{enumerate}
where 
\begin{align}
\label{eq:defweaksep}
  S(K,\delta)&=\{y\in\text{aff}(K):\dist(y,K)\le\delta\},\\
  S(K,-\delta)&=\{x\in K:x+B(0,\delta)\subseteq K\},
\end{align}
where $\dist(y,K)=\inf_{X\in K}||y-X||$ where the norm is Frobenius norm, $\text{aff}(K)$ is the affine hull of $K$, and $B(0,\delta)$ is the closed radius-$\delta$ ball. $B(0,\delta)$ is defined as:
\begin{equation}
  B(0,\delta)=  \{z\in \operatorname{aff}(K)-\operatorname{aff}(K):
|z|_{\mathrm F}\le\delta\}\;,
\end{equation}
where $\operatorname{aff}(K)-\operatorname{aff}(K)$ is the linear subspace parallel to the affine hull of \(K\).

To prove Theorem~\ref{thm:gurvits}, we use that weak membership in $\mathcal K^{\R}_{p,r}$ is \np-hard \cite{gurvits2004classical}. For this, we first verify that $\mathcal K^{\R}_{p,r}$ is compact. Recall $\mathcal K^{\R}_{p,r}$ is the convex hull of $G_{p,r}:=\{zz^T\otimes ww^T:\norm{z}_2=\norm{w}_2=1\}$, i.e. it is the set of all convex combinations $\sum_{i=1}^m \lambda_ig_i$, for points $g_i\in G_{p,r}$ and real $\lambda_i\ge0$ with $\sum_i\lambda_i=1$. Since the convex hull of a compact set in $\R^d$ is compact \cite[Corollary~(2.4)]{barvinok2002course}, we just need to check that $G_{p,r}$ is compact. This follows since $G_{p,r}$ is the image of the compact set $\{(z,w)\in\R^p\times\R^r:\|z\|_2=\|w\|_2=1\}$ under the map $F(z,w):=zz^T\otimes ww^T$, which is continuous as every entry of $F(z,w)$ is a polynomial in the entries of $z$ and $w$.

To prove Theorem~\ref{thm:gurvits}, note that
\begin{equation}
  \Sep^{\R}_{p,r}
  =\{tX:t\ge0,\ X\in\mathcal K^{\R}_{p,r}\}.
  \label{eq:conebase}
\end{equation}
Hence for $\Tr [Y]=1$,
\begin{equation}
  Y\in\Sep^{\R}_{p,r}
  \quad\Longleftrightarrow\quad
  Y\in\mathcal K^{\R}_{p,r}.
  \label{eq:traceone}
\end{equation}
Theorem~\ref{thm:gurvits} then follows from Gurvits's Definition~6.2, Theorem~6.7, and Remark~6.8 \cite{gurvits2004classical}; see also Ref.~\cite{ioannou2007computational} for further discussion on the RSDF (robust semidefinite feasibility) dimension requirements.

An exact separability answer thus gives a valid weak-membership answer.  If a trace-one query $Y$ belongs to $\Sep^{\R}_{p,r}$, then \eqref{eq:traceone} gives $Y\in\mathcal K^{\R}_{p,r}\subseteq S(\mathcal K^{\R}_{p,r},\delta)$.  If it does not belong to the cone, then $Y\notin\mathcal K^{\R}_{p,r}$ and therefore $Y\notin S(\mathcal K^{\R}_{p,r},-\delta)$. A non-positive-semidefinite query can be rejected before the quantum construction because every real-separable tensor is positive semidefinite.

\section{Explicit quantum model (Theorem~\ref{th:quantum-encoding})}
\label{apen:quantummodel}
In this appendix we prove Theorem~\ref{th:quantum-encoding}. This result is based on three smaller results which we state here and prove in the subsequent appendices. 

We begin with a lemma concerning the decomposition $H=BB^T$. Because we work with a Turing machine in the reduction, we want to work with rational numbers only, and moreover rational numbers which can be expressed using a polynomial number of bits.
We need to factor $H=BB^T$ for use in Theorem~\ref{thm:sep}, but the usual Cholesky factorization can involve irrational numbers. Therefore, we allow polynomially-larger rectangular blocks $B$ in order to obtain a rational factorization $H=BB^T$ in polynomial time and with polynomial bit precision.
\begin{lemma}[rational factorization]\label{lem:bbt}
Let $H\ge0$ be an $n\times n$ matrix with rational entries. Then there is a rational matrix $B$ with polynomially many columns and whose entries have polynomially many bits, and which satisfies $H=BB^T$. Additionally, $B$ is computable in polynomial time.
\end{lemma}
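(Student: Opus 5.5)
We obtain $B$ from an $LDL^T$ (rational, square-root-free Cholesky) decomposition of $H$, and then write each nonnegative rational pivot as a sum of rational squares.

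\textbf{Step 1: rational $LDL^T$.} We run symmetric Gaussian elimination on $H$ over $\Q$. At step $k$, if the current pivot is positive we eliminate with it. If it is zero, positive semidefiniteness forces the whole $k$-th row and column of the current Schur complement to vanish, so we skip that index. This produces $H=LDL^T$ with $L$ unit lower triangular and rational, and $D=\operatorname{diag}(d_1,\dots,d_n)$ with rational $d_k\ge0$. Exactness and polynomial bit length follow from the standard Bareiss-type argument. Every entry of $L$ and $D$ is a ratio of minors of $H$, namely principal minors or minors with one row swapped. Hadamard's inequality bounds the bit length of each such minor by $O(n(\log n+\beta))$, where $\beta$ is the maximal entry bit length. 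Hence $O(n^3)$ arithmetic operations on polynomial-size rationals suffice.

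\textbf{Step 2: avoid square roots.} The only obstruction to a rational factor is $\sqrt{d_k}$. Write $d_k=a_k/b_k$ with nonnegative integers. Then
\[
d_k=\frac{a_kb_k}{b_k^2}.
\]
By Lagrange's four-square theorem, $a_kb_k=\sum_{j=1}^4 m_{kj}^2$ with integers $m_{kj}$, so
\[
d_k=\sum_{j=1}^4 \left(\frac{m_{kj}}{b_k}\right)^2 .
\]
Let $\ell_k$ be the $k$-th column of $L$ and set
\[
B=\bigl[\,\ell_k\, m_{kj}/b_k\,\bigr]_{k\le n,\ j\le 4}.
\]
Then
\[
BB^T=\sum_k d_k\,\ell_k\ell_k^T=H.
\]
The matrix $B$ has $4n$ columns, which is polynomial, and its entries have polynomial bit length because $|m_{kj}|\le\sqrt{a_kb_k}$.

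\textbf{Main obstacle: computing the four-square decomposition in polynomial time.} A deterministic polynomial-time algorithm for four squares is not obviously available. The Rabin--Shallit algorithm is randomized, and the Pollack--Treviño algorithm is deterministic but relies on GRH. I would avoid this dependence in one of two ways.

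The first option uses the freedom in the number of columns. Writing $N=a_kb_k$, we greedily subtract the largest square $\lfloor\sqrt{N}\rfloor^2$ repeatedly. The remainder after one step is at most $2\sqrt{N}$, so the number of steps is $O(\log\log N)$ before reaching a constant, and a constant remainder is a sum of at most that many $1^2$ terms. This yields $O(\log\log N)$ squares per pivot, still polynomially many columns. Integer square roots are computable in polynomial time by Newton iteration.

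The second option, if one prefers to keep the column count near-linear, uses binary expansion: write $N=\sum_t 2^{e_t}$ and express each $2^{e}$ as one or two squares. This gives $O(\log N)$ columns per pivot, which is also polynomial.

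Either route is fully deterministic, so $B$ is computable in polynomial time. This establishes all claims of Lemma~\ref{lem:bbt}.
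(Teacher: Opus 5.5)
Your proof is correct and takes essentially the same route as the paper: a rational $LDL^T$ factorization by symmetric Gaussian elimination, with bit lengths controlled through minors and Hadamard's inequality, then writing $d_k=a_kb_k/b_k^2$ and expressing $a_kb_k$ as a sum of $O(\log(a_kb_k))$ squares via its binary expansion, which is exactly your second option. Your greedy variant, giving $O(\log\log N)$ squares per pivot, and your explicit treatment of odd powers of two are minor refinements of that same idea.
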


We will prove Lemma~\ref{lem:bbt} in Appendix~\ref{apen:rationaldecomp}.
We next provide the condition for when $H$ is a real separable matrix.
\begin{theorem}[separability condition]\label{thm:sep}
Fix $p,r\in\N$, and let $H\in\Sym_p(\R)\otimes\Sym_r(\R)$ be a $pr\times pr$ positive semidefinite matrix. We can naturally partition $H$ into submatrices, $H=(H_{\alpha\beta})_{\alpha,\beta=1}^p$, for $r\times r$ submatrices $H_{\alpha\beta}$. Suppose we have a decomposition $H=BB^T$ for some $pr\times n_B$ rectangular matrices $B$ (cf. Lemma~\ref{lem:bbt}), with $n_B$ polynomial in $pr$ and in the number of bits required to represent entries in $H$. Partition $B$ into $r\times n_B$ row-blocks, $B=\left(\begin{smallmatrix}B_1\\\vdots\\B_p\end{smallmatrix}\right)$, so that $H_{\alpha\beta}=B_\alpha B_\beta^T$.

Then $H$ is real separable if and only if the $B_\alpha$ admit a finite commuting Hermitian completion; i.e., if and only if there exist Hermitian matrices $K_\alpha$ of some finite size so that the following matrices pairwise commute:
\begin{align}\label{eqn:Ca}
C_\alpha(K_\alpha):=\begin{pmatrix}
\mathbf 0_r& [B_\alpha\; \mathbf 0]\\
[B_\alpha\;\mathbf 0]^\dagger& K_\alpha
\end{pmatrix},\quad \alpha=1,\ldots,p,
\end{align}
where $[B_\alpha\;\mathbf 0]$ indicates padding with some finite number of zero columns. Moreover, when $H$ is real separable, then $C_\alpha$ can be taken real.
\end{theorem}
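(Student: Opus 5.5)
The plan is to translate commutation of the block matrices $C_\alpha$ into a statement about Gram matrices, and then use joint diagonalisation for one direction and an explicit dilation for the other. Write $M_\alpha=[B_\alpha\;\mathbf 0]$ and let $E$ denote the isometric embedding of $\C^r$ as the first block. Since $C_\alpha$ has zero top-left block, the top-left block of $C_\alpha C_\beta$ is $M_\alpha M_\beta^\dagger=B_\alpha B_\beta^T=H_{\alpha\beta}$. In other words,
\begin{equation}
H_{\alpha\beta}=E^\dagger C_\alpha C_\beta E .
\end{equation}
This identity carries the whole proof.

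\textbf{Commuting completion $\Rightarrow$ separable.} Suppose the $C_\alpha$ pairwise commute. Then they are simultaneously diagonalisable: $C_\alpha=\sum_k\lambda_{\alpha k}P_k$, with real eigenvalues $\lambda_{\alpha k}$ and common orthogonal spectral projectors $P_k$. The identity above gives
\begin{equation}
H_{\alpha\beta}=\sum_k\lambda_{\alpha k}\lambda_{\beta k}\,G_k,\qquad G_k:=E^\dagger P_kE\succeq0 .
\end{equation}
Setting $z_k=(\lambda_{\alpha k})_\alpha\in\R^p$, this reads $H=\sum_k z_kz_k^T\otimes G_k$. Because $H$ and the $z_k$ are real, we may replace each $G_k$ by its real part $\mathrm{Re}\,G_k$, which is still a real symmetric positive semidefinite matrix. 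Diagonalising $\mathrm{Re}\,G_k=\sum_j w_{kj}w_{kj}^T$ then exhibits $H\in\Sep^{\R}_{p,r}$.

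\textbf{Separable $\Rightarrow$ commuting completion.} Write $H=\sum_{k=1}^m z_kz_k^T\otimes w_kw_k^T$, and set $W=[w_1\cdots w_m]$ and $Z_\alpha=\mathrm{diag}(z_{k\alpha})_k$. Then $H_{\alpha\beta}=(WZ_\alpha)(WZ_\beta)^T$.

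The first step is to normalise $B$. The stacked matrices $B$ and $\tilde B:=(WZ_\alpha)_\alpha$ have the same Gram matrix $BB^T=\tilde B\tilde B^T=H$. After zero-padding both to a common width, there is therefore a real orthogonal $Q$ with $[B\;\mathbf 0]=[\tilde B\;\mathbf 0]Q$. Conjugating the sought $C_\alpha$ by $I_r\oplus Q$ preserves commutation and the zero top-left block, so it suffices to treat $B_\alpha=WZ_\alpha$.

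The second step is to rescale. Replacing $z_k\mapsto cz_k$ and $w_k\mapsto w_k/c$, we may assume $WW^T\preceq I_r$.

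The third step is the dilation. On $\mathcal K=\C^r\oplus(\C^m\otimes\C^2)$, define the commuting real symmetric operators $X_\alpha=0_r\oplus(Z_\alpha\otimes\sigma_x)$, and the real isometry
\begin{equation}
Vv=Sv\oplus(W^Tv\otimes e_0),\qquad S=(I-WW^T)^{1/2}.
\end{equation}
These have the following properties.
\begin{itemize}
\item $V^\dagger X_\alpha V=W Z_\alpha\,(e_0^T\sigma_xe_0)\,W^T=0$, so the top-left block vanishes.
\item $X_\alpha Vv=0\oplus(Z_\alpha W^Tv\otimes e_1)$, which is orthogonal to $\mathrm{ran}\,V$.
\item The subspace $0\oplus(\C^m\otimes e_1)$ lies in $(\mathrm{ran}\,V)^\perp$, and we complete it to an orthonormal basis of $(\mathrm{ran}\,V)^\perp$.
\end{itemize}
In the basis $\mathrm{ran}\,V\oplus(\mathrm{ran}\,V)^\perp$, each $X_\alpha$ then has exactly the form $C_\alpha(K_\alpha)$, with off-diagonal block $[WZ_\alpha\;\mathbf 0]$ and $K_\alpha$ the compression of $X_\alpha$ to $(\mathrm{ran}\,V)^\perp$. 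These matrices are real, and they commute because they are unitary conjugates of the commuting $X_\alpha$. The entries may be irrational, but the statement only asks for existence.

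The main obstacle is this forward direction. One must realise $B_\alpha$ as the corner of a commuting family whose $r\times r$ corner is exactly zero, while $V$ remains isometric. The rescaling to $WW^T\preceq I$, together with the $\sigma_x$ trick that kills the diagonal corner, is the construction I expect to make this work. The auxiliary reduction from an arbitrary factorisation $B$ to $WZ_\alpha$ through an orthogonal $Q$ also needs care with the padding widths.
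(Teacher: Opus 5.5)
Your proposal is correct and is essentially the paper's proof. The $(\Leftarrow)$ direction is identical, and your $\sigma_x$ dilation, read in the eigenbasis of $\sigma_x$, is exactly the paper's diagonal construction with $(x_{i,\pm},u_{i,\pm})=(\pm z_i,w_i/\sqrt2)$: the rows of $S=(I_r-WW^T)^{1/2}$ play the role of the padding vectors $q_k$, and your orthogonal $Q$ is the paper's final alignment $U$, applied first rather than last and with an explicit basis of $(\mathrm{ran}\,V)^\perp$ in place of an arbitrary orthogonal completion. The padding-width issue you flag is handled as in the paper: append zero direct summands (equivalently, trivial terms $z_k=w_k=0$) until the dilated width is at least $n_B$, which clearly preserves commutation.
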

Theorem~\ref{thm:sep} is proven in Appendix~\ref{apen:separability}.

The last subingredient realizes the matrices in \eqref{eqn:Ca} as an explicit quantum multiparameter estimation model.
\begin{proposition}[Quantum-model realization]\label{lem:quantum-model}
Given a rational positive semidefinite $H\in\Sym_p(\R)\otimes\Sym_r(\R)$ and a rational factorization $H=BB^T$, one can construct a local quantum model specified by rational matrices $(\rho_0,\rho_1,\ldots,\rho_p)$ with $\mathcal{J}_\text{Q}\succ0$ such that finite commuting SLD representatives exist if and only if the commuting completion in Theorem~\ref{thm:sep} exists.  
\end{proposition}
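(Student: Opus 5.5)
We prove Proposition~\ref{lem:quantum-model} with the explicit model stated in the main text. We set $\rho_0=(I_{r'}\oplus0_q)/r'$ on $\mathcal H=(\mathcal T\oplus\mathcal S)\oplus\mathcal N$, and we take $q=n_B$, the number of columns of $B$. We then define
\begin{equation}
\rho_i:=\tfrac12\bigl(L_i^{(0)}\rho_0+\rho_0L_i^{(0)}\bigr).
\end{equation}
These $\rho_i$ are exactly the derivatives at $\theta=0$ of $\rho_\theta=e^{X/2}\rho_0e^{X/2}$: differentiating gives $\tfrac12(\partial_iX\,\rho_0+\rho_0\,\partial_iX)$, with $\partial_i\psi(0)=\Tr[\rho_0L_i^{(0)}]=\Tr[T_i]/r'=0$. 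All entries are rational and are computed in polynomial time from $B$, which Lemma~\ref{lem:bbt} supplies.

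The model has the required properties. It is real and has constant rank, since the conjugation by $e^{X/2}$ is invertible. Each $\rho_i$ has trace zero and is Hermitian. The SLD equations are solvable because $L_i^{(0)}$ is a solution. By the support/kernel analysis of Appendix~\ref{apen:extendedSLDs}, the blocks of $L_i$ on $\mathcal T\oplus\mathcal S$ and on the support--kernel part are forced to equal those of $L_i^{(0)}$, namely $T_i\oplus0$ and $\binom{0}{B_i}$. Only the kernel block $K_i$ on $\mathcal N\oplus\mathcal A$ is free.

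For $\mathcal J_\text{Q}\succ0$ we use the formula in Appendix~\ref{apen:extendedSLDs}. It gives
\begin{equation}
\mathcal J_{\text{Q},ij}=\frac{1}{r'}\Bigl(\Tr[T_iT_j]+\Tr[B_iB_j^T]\Bigr).
\end{equation}
The first term equals $\delta_{ij}+1$, which is the Gram matrix $I+\mathbf 1\mathbf 1^T\succ0$. The second term is a PSD Gram matrix. Hence $\mathcal J_\text{Q}\succ0$.

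Now fix any extension with $L_i=(T_i\oplus0_{\mathcal S})\oplus\text{(off-diagonal }B_i)\oplus K_i$. It remains to show that the $L_i$ commute if and only if the $C_\alpha(K_\alpha)$ of Theorem~\ref{thm:sep} commute. The $\mathcal T$ block is decoupled from everything else: no off-diagonal term touches $\mathcal T$, since the off-diagonal blocks are $\binom{0}{B_i}$. Therefore $L_i=T_i\oplus C_i(K_i)$ under $\mathcal H\oplus\mathcal A=\mathcal T\oplus(\mathcal S\oplus\mathcal N\oplus\mathcal A)$, where the padding zeros in $C_i$ correspond to $\mathcal A$. The $T_i$ are diagonal and so commute. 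Thus $[L_i,L_j]=0$ for all $i,j$ exactly when $[C_i(K_i),C_j(K_j)]=0$ for all $i,j$.

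The converse direction needs a matching of sizes. Given a commuting completion $C_\alpha(K_\alpha)$ whose padded block $[B_\alpha\;\mathbf 0]$ has some number $m\ge n_B$ of columns, we choose $\mathcal A$ of dimension $m-n_B$ so that the blocks line up. Setting the kernel blocks of the $L_\alpha$ equal to $K_\alpha$ then gives commuting SLD representatives. The same bookkeeping, read in reverse, handles the forward direction.

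The main obstacle is this identification of blocks, in particular making sure the forced support--kernel block is exactly $B_i$. The SLD equation forces $\rho_{0,S}L_{SK}=2E_i$. This is consistent only because $\rho_{0,S}$ is scalar, so $E_i=\tfrac{1}{2r'}\binom{0}{B_i}$ gives $L_{SK}=\binom{0}{B_i}$ exactly. We also check that the purpose of the $\mathcal T$ block is only to make $\mathcal J_\text{Q}$ positive definite without affecting commutativity, which holds since it is a commuting direct summand.
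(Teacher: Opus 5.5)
Your proposal is correct and follows the paper's proof essentially step for step. It uses the same model $\rho_0=(I_{r'}\oplus0_q)/r'$ and $\rho_i=\tfrac12(L_i^{(0)}\rho_0+\rho_0L_i^{(0)})$, the same support/kernel analysis leaving only the kernel blocks $K_i$ free, the same splitting $L_i=T_i\oplus C_i(K_i)$ with commuting $T_i$, and the same $\mathcal{J}_\text{Q}=\frac{1}{r'}(I_p+\mathbf 1\mathbf 1^T+G)\succ0$; your explicit checks that $\partial_i\psi(0)=0$ and that $\dim\mathcal A=m-n_B$ fill in details the paper leaves implicit.

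One small correction: your remark that the forced support--kernel block equals $B_i$ ``only because $\rho_{0,\text{S}}$ is scalar'' is not right. The forced support and support--kernel blocks coincide with those of $L_i^{(0)}$ for any full-rank $\rho_{0,\text{S}}$, because $L_i^{(0)}$ solves the SLD equation by construction and those blocks are uniquely determined.
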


Combining the above, we can now prove Theorem~\ref{th:quantum-encoding}. 

\begin{proof}[Proof of Theorem~\ref{th:quantum-encoding}]
Given $H$, we can compute a rational factorization $H=BB^T$ by Lemma~\ref{lem:bbt}. Applying Proposition~\ref{lem:quantum-model}, we obtain rational local data $(\rho_0,\rho_1,\ldots,\rho_p)$ with $\mathcal{J}_\text{Q}\succ0$ such that finite commuting SLD representatives exist if and only if the completion in Theorem~\ref{thm:sep} exists. Theorem~\ref{thm:sep} implies such a completion exists if and only if $H\in\Sep_{p,r}^\R$.
Polynomial time and encoding lengths are guaranteed by Lemma~\ref{lem:bbt} and Proposition~\ref{lem:quantum-model}.
\end{proof}

\section{Proof of rational decomposition (Lemma~\ref{lem:bbt})}
\label{apen:rationaldecomp}

In this appendix, we prove the rational decomposition $H=BB^T$ in Lemma~\ref{lem:bbt}.

\begin{proof}[Proof of Lemma~\ref{lem:bbt}]
The input is $H$, so all polynomial statements mean polynomial in $n$ and the number of bits required for the entries of $H$. 
Since $H$ is symmetric positive-semidefinite, we can use symmetric Gaussian elimination (i.e. applying matching row and column operations) to write \cite[\S4]{golub2013matrix}
\begin{align}
H=EDE^T,\quad D=\operatorname{diag}(d_1,\ldots,d_s,0,\ldots,0),
\end{align}
for positive rational entries $d_1,\ldots,d_s$ and elementary matrix operations matrix $E$.
We just need to take a (rational) square root of $D$ to complete the factorization.
For $d_j=a_j/b_j$ with $a_j,b_j\in\Z_+$ relatively prime, write $d_j=a_jb_j/b_j^2$. We can write $a_jb_j$ as a sum of $\ell_j:=O(\log(a_jb_j))$ squares by just expanding it in its binary expansion. We denote by $L$ the maximum of all $\ell_j$. Then writing $d_j=\sum_{\gamma=1}^{\ell_j}\frac{s_{j,\gamma}^2}{b_j^2}=\sum_{\gamma=1}^{\ell_j} q_{j,\gamma}^2$ and letting $v_j$ be the $j$th column of $E$, we get
\begin{align}
H&=\sum_{j=1}^s\sum_{\gamma=1}^{\ell_j} (v_jq_{j,\gamma})(q_{j,\gamma}v_j)^T.
\end{align}
Letting $B$ be the $n\times O(sL)$ matrix with columns $q_{j,\gamma}v_j$, indexed by $(j,\gamma)$, then immediately we get $H=BB^T$.

Note that $B$ has polynomially many columns and is computable in polynomial time via Gaussian elimination. Additionally, all entries have polynomial-length bit encodings by \cite[(1.4.8)/Edmonds' theorem]{grotschel1993geometric}.
(To see this directly, one can express the diagonal entries $d_j$ and the entries of $E$ using Schur complement, and bound the size of the determinants using Hadamard's inequality \cite[9.66]{axler2024linear}.)
\end{proof}

\section{Proof of separability condition (Theorem~\ref{thm:sep})} \label{apen:separability}
In this appendix, we prove the separability condition in Theorem~\ref{thm:sep}. We present two simple examples that illustrate this theorem.

\begin{proof}
$(\Leftarrow)$: We first prove the easier direction, that a commuting completion implies real separability. We suppose there are matrices $K_\alpha$ so that the $C_\alpha(K_\alpha)$ in \eqref{eqn:Ca} commute. Our goal is to show $H$ is $(p,r)$ real separable, i.e. can be written $H=\sum_j |x_j\rangle\langle x_j|\otimes|w_j\rangle\langle w_j|$ for $|x_j\rangle\in \R^p$ and $|w_j\rangle\in\R^r$. 
First, note that we can recover $H$ from the $C_\alpha=C_\alpha(K_\alpha)$ as follows. We see 
\begin{align}\label{eqn:CC}
C_\alpha C_\beta&=\begin{pmatrix}
B_\alpha B_\beta^T & [B_\alpha\;\mathbf 0]K_\beta\\
K_\alpha[B_\beta\;\mathbf 0]^T& [B_\alpha\;\mathbf 0]^T[B_\beta\;\mathbf 0]+K_\alpha K_\beta
\end{pmatrix}
\quad\Longrightarrow\quad H_{\alpha\beta}=B_\alpha B_\beta^T=PC_\alpha C_\beta P^T,
\end{align}
for $P$ the coordinate projection onto the first $r$ coordinates.
Since the $C_\alpha$ are Hermitian and commute, they have a common eigenbasis $\{|v_j\rangle\}_j$, and we can write $C_\alpha=\sum_jx_{\alpha,j}|v_j\rangle\langle v_j|$, for some real $x_{\alpha,j}$ and $\alpha=1,\ldots,p$. Inserting this into \eqref{eqn:CC} and letting $|u_j\rangle:=P|v_j\rangle$ gives
\begin{align}
H_{\alpha\beta}=\sum_j x_{\alpha,j}x_{\beta,j}|u_j\rangle\langle u_j|.
\end{align}
Letting $|x_j\rangle:=(x_{1,j},x_{2,j},\ldots,x_{p,j})$ so that $\langle\alpha|x_j\rangle=x_{\alpha,j}$, the above immediately gives $H=\sum_j|x_j\rangle\langle x_j|\otimes|u_j\rangle\langle u_j|$, which is of the desired separable form.
Since $|u_j\rangle$ may not be real, write $|u_j\rangle=|y_j\rangle+i|z_j\rangle$ for real $|y_j\rangle,|z_j\rangle$. Then the imaginary part must be 0 since $H$ is real, hence we get the real decomposition
\begin{align}
H&=\sum_j|x_j\rangle\langle x_j|\otimes (|y_j\rangle\langle y_j|+|z_j\rangle\langle z_j|).
\end{align}

$(\Rightarrow)$: Given a real separable $H$, we now construct a commuting extension $\{C_\alpha(K_\alpha)\}$. 
Note that in order for the $C_\alpha$ to commute, we will need the decomposition as in the other proof direction, $C_\alpha=\sum_jx_{\alpha,j}|v_j\rangle\langle v_j|$, with 
\begin{align*}
C_\alpha=\begin{pmatrix}\mathbf 0_r&[B_\alpha\;\mathbf 0]\\ [B_\alpha\;\mathbf 0]^\dagger&*\end{pmatrix},\quad\text{which gives the requirement}\quad PC_\alpha P^T=\sum_jx_{\alpha,j}|u_j\rangle\langle u_j|=\mathbf0_r,
\end{align*}
for $P$ the projection onto the first $r$ coordinates and $|u_j\rangle=P|v_j\rangle$. Additionally, we would need $\sum_j|u_j\rangle\langle u_j|=P\sum_j|v_j\rangle\langle v_j|P^T=I_{r}$.
Thus to construct such $C_\alpha$, we are looking for $|x_j\rangle$ and $|u_j\rangle$ satisfying the conditions:
\begin{align}\label{eqn:c-cond}
\sum_jx_{\alpha,j}|u_j\rangle\langle u_j|=\mathbf0_r,
\quad \sum_{j}|u_j\rangle\langle u_j|=I_r,\quad \sum_jx_{\alpha,j}x_{\beta,j}|u_j\rangle\langle u_j|=H_{\alpha\beta}.
\end{align}

Given some finite decomposition $H=\sum_i|z_i\rangle\langle z_i|\otimes|w_i\rangle\langle w_i|$, we will use the $z_i$'s and $w_i$'s to construct the desired $|x_j\rangle\in\R^p$ and $|u_j\rangle\in\R^r$. 
Since we can rescale $(z_i,w_i)\mapsto(tz_i,t^{-1}w_i)$ without changing $H$, we will take $t>0$ large enough so that $\sum_i |w_i\rangle\langle w_i|\preceq I_r$. The blocks of $H$ are $H_{\alpha\beta}=\sum_i z_{\alpha,i}z_{\beta,i}|w_i\rangle\langle w_i|$. To construct $|x_i\rangle=(x_{1,i},\ldots,x_{p,i})$ and $|u_i\rangle$ satisfying \eqref{eqn:c-cond}, define 
\begin{align}
|x_{i,\pm}\rangle:=\pm |z_i\rangle,\quad |u_{i,\pm}\rangle:=\frac{|w_i\rangle}{\sqrt{2}}.
\end{align}
Then letting $j$ run over $(i,\pm)$ gives the first and third equations of \eqref{eqn:c-cond}.
The second equation at the moment gives $\sum_j|u_j\rangle\langle u_j|=\sum_i|w_i\rangle\langle w_i|$.
We will add more vectors in order to change this into $I_r$. Recall we chose $t>0$ large enough so that $Q:=I_r-\sum_i|w_i\rangle\langle w_i|\succeq0$. Since $Q\succeq0$, factor it as $Q=\sum_k |q_k\rangle\langle q_k|$, and add the pairs $(|x_k\rangle,|u_k\rangle)=(\mathbf0,|q_k\rangle)$, for each $k$, to $(|x_j\rangle,|u_j\rangle)_j$. Since $|x_k\rangle=\mathbf0$, this doesn't affect the first or third equations of \eqref{eqn:c-cond}. By construction, the second equation $\sum_j|u_j\rangle\langle u_j|=I_r$ is now satisfied.
It will turn out to be useful to also add trivial pairs $(|x_j\rangle,|u_j\rangle)=(\mathbf0,\mathbf0)$ to the collection, which do not affect \eqref{eqn:c-cond}. Letting $\tilde{J}$ be the index set over which $j\in \tilde{J}$ runs over and $N_{\tilde{J}}=|\tilde{J}|$, we do this until $N_{\tilde{J}}-r\ge $ the number of columns in $B$. This ensures that $B$ can be padded with zero columns to match the dimensions of $Y$ below, without changing the resulting matrix $BB^T$.

It remains to construct the $C_\alpha$ from the $x_j$ and $u_j$.
Let $V$ be the $N_{\tilde{J}}\times r$ matrix whose $N_{\tilde{J}}$ rows consist of the (real) vectors $u_j^T$. Then $V^TV=I_r$, by the second equation of \eqref{eqn:c-cond}, and so $V$ can be extended to an orthogonal matrix $O=[V\;W]\in\orth(N_{\tilde{J}})$, where $\orth(N_{\tilde{J}})$ is the orthogonal group in dimension $N_{\tilde{J}}$. For $\alpha=1,\ldots,p$, define $D_\alpha:=\operatorname{diag}(x_{\alpha,1},\ldots,x_{\alpha,N_{\tilde{J}}})$.
Using \eqref{eqn:c-cond}, we can check that
\begin{align}\label{eqn:Vconjugate}
V^TD_\alpha V=\mathbf0_r,\quad V^TD_\alpha D_\beta V=H_{\alpha\beta}, \quad O^TD_\alpha O=\begin{pmatrix}\mathbf 0_r&V^TD_\alpha W\\ W^TD_\alpha V&W^TD_\alpha W\end{pmatrix}.
\end{align}
The candidate $O^TD_\alpha O$ is nearly of the form $C_\alpha$, but we need the off-diagonal blocks to be the specific $[B_\alpha\;\mathbf 0]$ and $[B_\alpha\;\mathbf0]^T$. Letting $\wt B_{\alpha}:=V^TD_\alpha W$, and using $WW^T+VV^T=I$ and \eqref{eqn:Vconjugate}, we see that 
\begin{align}
\wt B_\alpha\wt B_\beta^T=V^TD_\alpha(I-VV^T)D_\beta V=H_{\alpha\beta}.
\end{align}
Thus $\wt B_\alpha\wt B_\beta^T=B_\alpha B_\beta^T$ for all $\alpha,\beta$.
We just need to find an orthogonal transformation which maps all the $\wt B_{\alpha}$'s to $B_{\alpha}$'s (padded with zeros). Stack all the $\wt B_{\alpha}'s$ into a big matrix $\wt B$, and stack all the $B_\alpha$'s, padded with zeros to have the same number of columns as the $\wt B_\alpha$'s, into a big matrix $Y$, so that
\begin{align}
\wt B=\begin{pmatrix}\wt B_1\\\vdots\\\wt B_p\end{pmatrix},\quad Y=\begin{pmatrix}[B_1\;\mathbf0]\\\vdots\\ [B_p\;\mathbf0]\end{pmatrix},
\quad \wt B\wt B^T=YY^T.
\end{align}
Then since $\wt B$ and $Y$ have the same dimensions $pr\times (N_{\tilde{J}}-r)$, there is an orthogonal matrix $U\in\orth(N_{\tilde{J}}-r)$ such that $\wt B U=Y$ (e.g. use singular value decomposition) \cite[\S2]{golub2013matrix}. Thus $\wt B_\alpha U=[B_\alpha\;\mathbf0]$ for all $\alpha=1,\ldots,p$.

Now we just replace $W$ with $WU$. Let $O':=[V\;WU]\in\orth(N_{\tilde{J}})$, and define
\begin{align}
C_\alpha:=(O')^TD_\alpha O'=\begin{pmatrix}
V^TD_\alpha V&V^TD_\alpha WU\\
U^TW^TD_\alpha V&U^TW^TD_\alpha WU
\end{pmatrix}
=\begin{pmatrix}
\mathbf0_r&[B_\alpha\;\mathbf0]\\
[B_\alpha\;\mathbf0]^T&K_\alpha
\end{pmatrix},
\end{align}
for symmetric $K_\alpha:=U^TW^TD_\alpha WU$. The $C_\alpha$'s commute since the $D_\alpha$'s are diagonal and $O'$ is the same for all $\alpha=1,\ldots,p$. Thus the $C_\alpha$ give the desired commuting extension.
\end{proof}

\subsection{Simple examples of commuting completion}

We now illustrate the above construction at $p=r=2$.
Take $H=vv^T\in\Sym_4(\R)$ for a unit vector $v\in\R^4$,
and reshape $v$ into the $2\times2$ matrix
$V_{\alpha i}=v_{(\alpha,i)}$.
Then $H$ is real separable if and only if $V$ has rank one. We give two examples below, one where $H$ is separable, the other where it is entangled. The entangled example lies outside the additional restriction
$H\in\Sym_2(\R)\otimes\Sym_2(\R)$ in
Theorem~\ref{thm:sep}. Its obstruction to commuting completion
will therefore be established directly.

As we show in Appendix~\ref{apen:realization}, to realize these matrices as SLD blocks, take
$\mathcal T\cong\C^3$, $\mathcal S\cong\C^2$, and
$\mathcal N\cong\C$, and define
\begin{equation}
T_1=\operatorname{diag}(-1,1,0),\qquad
T_2=\operatorname{diag}(-1,0,1)\;.    
\end{equation}
On $\mathcal H=\mathcal T\oplus\mathcal S\oplus\mathcal N$,
set
\begin{equation}
\rho_0=\frac15(I_5\oplus0),\qquad
L_i^{(0)}=T_i\oplus C_i(0),\qquad
\dot\rho_i=\frac12
\bigl(L_i^{(0)}\rho_0+\rho_0L_i^{(0)}\bigr)\;.    
\end{equation}
These are two-parameter local quantum models with rank-5 states
on a six-dimensional Hilbert space. The SLD representatives have the form
$L_i=T_i\oplus C_i(K_i)$, and hence
\begin{equation}
[L_1,L_2]=0_{\mathcal T}\oplus[C_1(K_1),C_2(K_2)]\;.    
\end{equation}

\subsubsection{Separable case} Let $v=(1,0,0,0)^T$, i.e. $V=\begin{psmallmatrix}1&0\\0&0\end{psmallmatrix}$,
rank one. Then $H=vv^T$ has $\operatorname{Tr}[H]=1$ and admits the trivial rank-one
factorization $B=v$, giving row-blocks $B_1=(1,0)^T$, $B_2=(0,0)^T$. As we discuss in Appendix~\ref{apen:realization}, the $\mathcal{S}\oplus\mathcal{N}$ block of the SLD operators becomes
\begin{equation}
C_1(K_1)=\begin{pmatrix}0&0&1\\0&0&0\\1&0&K_1\end{pmatrix},\qquad
C_2(K_2)=\begin{pmatrix}0&0&0\\0&0&0\\0&0&K_2\end{pmatrix}. 
\end{equation}
A direct computation gives $[C_1(K_1),C_2(K_2)]=\begin{psmallmatrix}0&0&K_2\\0&0&0\\-K_2&0&0\end{psmallmatrix}$,
which vanishes iff $K_2=0$; $K_1$ is completely unconstrained. So a commuting
completion exists (e.g. $K_1=0,\ K_2=0$), consistent with $H$ being separable.

\subsubsection{Entangled case} Let $v=\tfrac1{\sqrt2}(1,0,0,1)^T$, i.e. $V=\tfrac1{\sqrt2}I_2$,
rank two — the direct real analogue of a Bell state, hence entangled. Again
$\operatorname{Tr}[H]=1$ and $B=v$, giving $B_1=\tfrac1{\sqrt2}(1,0)^T$,
$B_2=\tfrac1{\sqrt2}(0,1)^T$ (here we ignore the requirement on $B$ to be rational). Writing $a:=1/\sqrt2$, we have
\begin{equation}
C_1(K_1)=\begin{pmatrix}0&0&a\\0&0&0\\a&0&K_1\end{pmatrix},\qquad
C_2(K_2)=\begin{pmatrix}0&0&0\\0&0&a\\0&a&K_2\end{pmatrix}. 
\end{equation}
Now $[C_1(K_1),C_2(K_2)]=\begin{psmallmatrix}0&a^2&aK_2\\-a^2&0&-aK_1\\-aK_2&aK_1&0\end{psmallmatrix}$,
whose $(1,2)$ entry is $a^2=\tfrac12$, independent of $K_1,K_2$. Therefore, no choice of the
free kernel blocks at this size can make $C_1$ and $C_2$ commute. This obstruction persists under every finite enlargement of the
kernel. Indeed, for $\widehat B_i=[B_i\;\;0]$ and arbitrary
Hermitian kernel blocks $K_i$, the upper-left block of the
commutator is
\begin{equation}
[\widehat C_1,\widehat C_2]_{\mathcal S\mathcal S}
=
\widehat B_1\widehat B_2^\dagger
-\widehat B_2\widehat B_1^\dagger
=
\frac12\begin{pmatrix}0&1\\-1&0\end{pmatrix}
\ne0.    
\end{equation}
Thus no finite commuting completion exists.
By Theorem~\ref{theorem:finitecriterion}, the corresponding
quantum model's QCRB is unattainable.

\section{Proof of quantum model realization (Proposition~\ref{lem:quantum-model})}
\label{apen:realization}
In this appendix we construct the explicit multiparameter estimation model required to prove Proposition~\ref{lem:quantum-model}. The model consists of two main parts: one part encoding the input $H$ to  the \np-hard problem, and an auxiliary commuting part to ensure $\mathcal{J}_\text{Q}\succ0$.
\begin{proof}[Proof of Proposition~\ref{lem:quantum-model}]
We want to construct a family of density matrices $\{\rho_\theta\}$ for $\theta$ near $0$ on some Hilbert space $\mathcal H$, such that its local data at $\theta=0$ satisfies the properties in Proposition~\ref{lem:quantum-model}.
Recall from Appendix~\ref{apen:quantummodel} that we may partition the input $H\in\Sym_p(\R)\otimes\Sym_r(\R)$ into $r\times r$ blocks $(H_{ij})_{i,j=1}^p$. 
For a factorization $H=BB^T$, we can write $B=\left(\begin{smallmatrix}B_1\\\vdots\\B_p\end{smallmatrix}\right)$ with $B_i\in\Q^{r\times q}$, so that $H_{ij}=B_i B_j^T$.

We first define the Hilbert space $\mathcal H$.
Let $\mathcal{S}\cong\C^r$ be the domain of $H_{ij}$ (and $B_i^T$); this space will carry the information about the input $H$. Let $\mathcal N\cong\C^q$ be the domain of $B_i$, so $B_i:\mathcal N\to \mathcal S$.
Let $\mathcal T\cong\C^{p+1}$ be an auxiliary space (its purpose will be to ensure the QFI matrix $\mathcal{J}_\text{Q}$ is positive definite), with an orthonormal basis denoted by $|0\rangle,\ldots,|p\rangle$. Take
\begin{align}\label{eqn:space-decomposition}
\mathcal H=(\mathcal T\oplus \mathcal S)\oplus \mathcal N\cong \C^{p+1+r+q},
\end{align}
and define the density matrix
\begin{align}\label{eq:rho0}
\rho_0:=\frac{1}{r'}I_{\mathcal T\oplus \mathcal S}\oplus\mathbf0_{\mathcal N}=\begin{pmatrix}\frac1{r'}I_{r'}&0\\0&0_q\end{pmatrix},\quad r':=p+1+r,
\end{align}
which has $\supp \rho_0=\mathcal T\oplus \mathcal S$, $\ker \rho_0=\mathcal N$, and $\rank\rho_0=r'$.

On the auxiliary space $\mathcal T$, define the commuting operators
$T_i:=|i\rangle\langle i|-|0\rangle\langle0|$, $1\le i\le p$,
which are Hermitian and traceless. Set
\begin{align}
  L_i^{(0)}&:=
  \begin{pmatrix}
    T_i\oplus0_\mathcal{S}&\binom{0}{B_i}\\[2mm]
    (\,0\ \ B_i^T\,)&0_{\mathcal N}
  \end{pmatrix},
  \label{eq:initialsld}\\
  \rho_i&:=\frac12(L_i^{(0)}\rho_0+\rho_0L_i^{(0)})=\frac1{r'}\begin{pmatrix}
      T_i\oplus 0_\mathcal{S}&\frac12\binom{0}{B_i}\\[2mm]
      (\,0\ \ \frac12 B_i^T\,)&0_{\mathcal N}
  \end{pmatrix}.
  \label{eq:derivatives}
\end{align}
Note all matrices are rational. The commuting operators $T_i$ will be used later to ensure the QFI matrix $\mathcal{J}_\text{Q}$ is positive definite.
We want to show (1) the local data $(\rho_0,\rho_i)$ defined above admits finite commuting SLD representatives if and only if the completion in Theorem~\ref{thm:sep} exists, and (2) the local data comes from a smooth constant-rank model $\{\rho_\theta\}$. The latter we postpone to Appendix~\ref{apen:smoothmodel}.

For (1), we identify the freedom in choosing the SLD operators for this model. 
Let $\mathcal A\cong\C^d$ be any finite-dimensional auxiliary kernel space, and consider the extension $\mathcal H\oplus\mathcal A$, and the zero extensions $\rho_0\mapsto\rho_0\oplus 0_{\mathcal A}$, $\rho_i\mapsto\rho_i\oplus 0_{\mathcal A}$.
Regardless of how large the dimension of the extended kernel space $\mathcal N\oplus \mathcal A$ is, the extended state is $\rho_0=r'^{-1}\operatorname{diag}(I_{r'},0)$ in the support-kernel decomposition $(\mathcal T\oplus \mathcal S)\oplus(\mathcal N\oplus \mathcal A)$. For a Hermitian SLD candidate
$L=\begin{psmallmatrix}X&Y\\Y^\dagger&Z\end{psmallmatrix}$, written in the support-kernel decomposition, we see
\begin{equation}
      \frac12(L\rho_0+\rho_0L)
  =\frac1{2r'}
  \begin{pmatrix}2X&Y\\Y^\dagger&0\end{pmatrix}.
\end{equation}
Comparing with the fixed derivatives $\rho_i\oplus0_{\mathcal A}$ defined through \eqref{eq:derivatives} fixes the support block and support--kernel blocks of $L$, while leaving the kernel--kernel block $Z$ arbitrary.  
Then writing in the decomposition $\mathcal T\oplus(\mathcal S\oplus\mathcal N\oplus \mathcal A)$, we see every SLD operator on this extension is of the form
\begin{equation}
 L_i= T_i\oplus
  \begin{pmatrix}
    0_{\mathcal S}&[\,B_i\ 0\,]\\
    [\,B_i\ 0\,]^\dagger&K_i
  \end{pmatrix}\;,
  \label{eq:allslds}
\end{equation}
for Hermitian $K_i$. Thus, we can write the SLDs as $L_i=T_i\oplus C_i(K_i)$, where $C_i(K_i):=  \begin{pmatrix}
    0_{\mathcal S}&[\,B_i\ 0\,]\\
    [\,B_i\ 0\,]^\dagger&K_i
  \end{pmatrix}$ is of exactly the same form as the extensions used in Theorem~\ref{thm:sep}. Finally, recall the $T_i$'s commute, and observe that
\begin{equation}
    [L_i,L_j]=[T_i,T_j]\oplus[C_i(K_i),C_j(K_j)]=0\oplus[C_i(K_i),C_j(K_j)]\;.
\end{equation}
Therefore, the existence of commuting SLD operators for this problem is exactly equivalent to the existence of a commuting completion in Theorem~\ref{thm:sep}.

Finally, we wish to show that the QFI matrix $\mathcal{J}_\text{Q}$ for this model is positive definite. This is the purpose of the operators $T_i$ and auxiliary space $\mathcal T$. We have 
\begin{equation}
\label{eq:qfiTH}
    \mathcal{J}_{\text{Q},{ij}}=\frac12\Tr[\rho_0(L_iL_j+L_jL_i)]=\frac{1}{r'}\left(\Tr[T_iT_j]+\Tr\left[B_iB_j^T\right]\right)\;.
\end{equation}
By construction $\Tr[T_iT_j]=1+\delta_{ij}$, which contributes a term $I_p+\mathbf1\mathbf1^T$, for $\mathbf1=(1,\ldots,1)^T\in\R^p$. 
This is positive definite, as for any nonzero $c\in\mathbb{R}^p$
\begin{equation}
    c^T(I_p+\mathbf{1}\mathbf{1}^T)c=||c||_2^2+(\mathbf{1}^Tc)^2>0\;.
\end{equation}
Denote the second term as the matrix $G$ such that $G_{ij}=\Tr[B_iB_j^T]$. Observe that $\Tr[B_iB_j^T]$ is the Frobenius inner product. For any $c\in\mathbb{R}^p$ we have
\begin{equation}
    \begin{split}
        c^TGc&=\Tr\left[\sum_ic_iB_i\sum_jc_jB_j^T\right]
    =\left|\left|\sum_ic_iB_i\right|\right|_F^2
    \geq0\;.
    \end{split}
\end{equation}
Hence $G\succeq0$. Therefore, the QFI matrix is
\begin{equation}
\label{eq:QFIfineq}
    \mathcal{J}_\text{Q}=\frac{1}{r'}(I_p+\mathbf{1}\mathbf{1}^T+G)\succ0\;.
\end{equation}

The dimension and bit length are polynomial in those of the factorization $BB^T$.  Equation~\eqref{eq:rho0} makes $\rho_0$ maximally mixed on its support, the prescribed support--support blocks $T_i\oplus0_{\mathcal S}$ commute, and the source tensor enters only through the support--kernel blocks $B_i$.  This proves every assertion of Proposition~\ref{lem:quantum-model}. Furthermore, these properties hold for the smooth model discussed below. 
\end{proof}

\subsection{Smooth model}
\label{apen:smoothmodel}

We now construct a smooth constant-rank model that gives rise to the above derivatives in \eqref{eq:derivatives}. Note that constant-rank here means that the rank does not change with $\theta$. Set
\begin{equation}
X(\theta)=\sum_i\theta_iL_i^{(0)}-\psi(\theta)I,
  \qquad
  \rho_\theta=
\mathrm{e}^{\frac12X(\theta)}\rho_0\mathrm{e}^{\frac12X(\theta)}\;,
  \label{eq:smoothmodel}
\end{equation}
where $\psi(\theta)=\log(\Tr[\rho_0 \mathrm{e}^{\sum_i\theta_iL_i^{(0)}}])$ is a normalisation factor. Note that
\begin{equation}
    \frac{\partial}{\partial\theta_i}(\mathrm{e}^{\frac12X(\theta)}\rho_0\mathrm{e}^{\frac12X(\theta)})\big|_{\theta=0}=\frac12(\rho_0 L_i^{(0)}+L_i^{(0)}\rho_0)\;.
\end{equation}
Therefore, $\partial\rho_\theta/\partial\theta_i|_{\theta=0}=\rho_i$. As $\mathrm{e}^{X(\theta)/2}$ is invertible the rank of $\rho_\theta$ is the same as the rank of $\rho_0$.

\section{Approximate attainability}
\label{apen:approxattain}
In this appendix, we consider an approximate version of the attainability argument studied in the main text (Eq.~\eqref{eq:polypr}). Specifically, we consider the promise problem of deciding whether $V-\mathcal{J}_\text{Q}^{-1}=0$ or $\Tr[V-\mathcal{J}_\text{Q}^{-1}]\geq\epsilon$:

\begin{definition}[Gap attainability on the encoded model]
\label{def:encoded-gap-attainability}
Let $\mathcal M_H$ be the model constructed in
Proposition~\ref{lem:quantum-model} from a rational
$H\succeq0$ with $\Tr [H]=1$, and define
\begin{equation}
  \Delta(H)
  :=
  \inf_{(\Pi,\hat{\theta})}
  \Tr\!\bigl[V(\Pi,\hat{\theta})-\mathcal{J}_\text{Q}^{-1}\bigr]
  =
  \mathcal{C}_{\mathrm{MI}}(\mathcal M_H)-\mathcal{C}_{\mathrm Q}(\mathcal M_H).
\end{equation}
Given a rational $0<\varepsilon\le1$, distinguish the promised cases
\begin{equation}
  \mathrm{YES}:\quad V(\Pi,\hat{\theta})=\mathcal{J}_\text{Q}^{-1},
  \qquad
  \mathrm{NO}:\quad \Delta(H)\ge\varepsilon.
\end{equation}
\end{definition}

Note that the YES instance immediately implies $\Delta(H)=0$, and that since this is a promise problem, we do not require a correct output for $0<\Delta(H)<\varepsilon$. We will show that this problem is \np-hard and belongs to $\sans{PromiseNP}$\   ($\sans{PromiseNP}$ is the class of promise problems whose YES instances admit polynomial-size certificates verifiable in polynomial time\footnote{Strictly speaking, the problems considered here, as well as the weak-membership problems used in Appendix~\ref{apen:Gurvits}, are promise problems. The class $\sans{PromiseNP}$ is the promise-problem analogue of $\np$: YES instances admit polynomial-size certificates verifiable in polynomial time, and no certificate is accepted for NO instances, while no condition is imposed on inputs outside the promise.}).
Since the promise problem in Definition~\ref{def:encoded-gap-attainability} is an easier problem than the decision problem to decide $\Delta(H)<\varepsilon$ vs $\Delta(H)\ge\varepsilon$, we will also obtain \np-hardness of the latter.

\subsection{\np-hardness of approximate attainability}\label{subsec:approx-np}
We first prove that the problem in Definition~\ref{def:encoded-gap-attainability} is \np-hard. The reduction is from Gharibian's result on weak membership with an inverse-polynomial tolerance \cite{gharibian2010strong}.

\begin{proposition}[Approximate attainability]
\label{prop:approximate-attainability}f
Consider the setting of Definition~\ref{def:encoded-gap-attainability}.
For every $0<\delta\le1$,
\begin{equation}
  \dist(H,\mathcal K^{\R}_{p,r})\ge\delta
  \quad\Longrightarrow\quad
  \Delta(H)\ge
  \frac{\delta^2}
       {100(p+r+1)(pr+1)}.
  \label{eq:approximate-attainability-gap}
\end{equation}
Consequently, it is \np-hard to distinguish exact attainability ($V=\mathcal{J}_\text{Q}^{-1}\implies\Delta(H)=0$) from the
case in which every locally unbiased estimator satisfies
\begin{equation}
  \Tr[V-\mathcal{J}_\text{Q}^{-1}]\ge\varepsilon
\end{equation}
for $\varepsilon$ inverse-polynomial in $r$ and $p$.
\end{proposition}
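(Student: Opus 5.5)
The approach is to prove the contrapositive: if some locally unbiased estimator has small excess $\eta:=\Tr[V-\mathcal{J}_\text{Q}^{-1}]$, then $H$ lies close to $\mathcal K^{\R}_{p,r}$. The separable approximant is built directly from the measurement. The hardness statement then follows by plugging in Gharibian's inverse-polynomial weak-membership hardness \cite{gharibian2010strong}.

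\emph{Step 1 (dilation and commuting surrogates).} Take a finite-outcome POVM $\{\Pi_k\}$ and a locally unbiased estimator $\hat\theta$ with excess $\eta$. By Naimark, realise $\{\Pi_k\}$ as a projective measurement $\{E_k\}$ on $\mathcal H\oplus\mathcal A$, with $\rho_0$ and $\rho_i$ zero-extended. Set $y_k:=\mathcal{J}_\text{Q}\hat\theta(k)\in\R^p$ and define the commuting Hermitian operators $M_\alpha:=\sum_k y_{\alpha,k}E_k$.

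Local unbiasedness gives $\Tr[\rho_\beta X_\alpha]=\delta_{\alpha\beta}$, where $X_\alpha=\sum_k\hat\theta_\alpha(k)E_k$. The standard SLD Cramér--Rao argument is a Cauchy--Schwarz inequality in the inner product $\langle A,B\rangle_\rho=\Re\Tr[\rho A^\dagger B]$. Applied here, it gives
\begin{equation}
\sum_\alpha \Tr\!\big[\rho_0 (M_\alpha-L_\alpha)^2\big]\le \|\mathcal{J}_\text{Q}\|^2\,\eta ,
\end{equation}
for the SLD representatives $L_\alpha=T_\alpha\oplus C_\alpha(K_\alpha)$ of Eq.~\eqref{eq:allslds}. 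Since the kernel block $K_\alpha$ is free, it can be chosen so that this holds. The key observation is that the support--kernel part of $L_\alpha$ is fixed, so only $\rho_0$-weighted quantities enter.

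\emph{Step 2 (separable candidate and error bound).} Let $P$ be the coordinate projection onto $\mathcal S$; it lies inside $\supp\rho_0$. Exactly as in the $(\Leftarrow)$ direction of Theorem~\ref{thm:sep}, $H_{\alpha\beta}=PL_\alpha L_\beta P^\dagger$. Define
\begin{equation}
H'_{\alpha\beta}:=\Re\, PM_\alpha M_\beta P^\dagger=\sum_k y_{\alpha,k}y_{\beta,k}\,\Re(PE_kP^\dagger).
\end{equation}
Each $\Re(PE_kP^\dagger)$ is real PSD, so $H'\in\Sep^{\R}_{p,r}$. Write
\begin{equation}
PM_\alpha M_\beta P^\dagger-PL_\alpha L_\beta P^\dagger=P(M_\alpha-L_\alpha)M_\beta P^\dagger+PL_\alpha(M_\beta-L_\beta)P^\dagger .
\end{equation}
Since $\rho_0=I/r'$ on its support, $\|AP^\dagger\|_F^2\le r'\Tr[\rho_0A^\dagger A]$ for any operator $A$. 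This turns every Frobenius norm into a $\rho_0$-weighted norm, which Step 1 controls.

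The remaining size factors are bounded as follows. The norms $\Tr[\rho_0L_\alpha^2]$ are diagonal entries of $\mathcal{J}_\text{Q}$, which are $O(1)$ because $\Tr H=1$. The quantity $\Tr[\rho_0M_\alpha^2]$ is at most $(\mathcal{J}_\text{Q}V\mathcal{J}_\text{Q})_{\alpha\alpha}$, which is bounded since $\eta\le 1$. Also, $\mathcal{J}_\text{Q}^{-1}\preceq r'I$ by Eq.~\eqref{eq:QFIfineq}, and $\|\mathcal{J}_\text{Q}\|\le (p+2)/r'$. Summing over the $p^2$ blocks should give $\|H'-H\|_F\le c\sqrt{(p+r+1)(pr+1)\,\eta}$.

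\emph{Step 3 (normalise).} Rescale to $\hat H=H'/\Tr H'\in\mathcal K^{\R}_{p,r}$ using Eq.~\eqref{eq:conebase}. Because $|\Tr H'-1|\le\sqrt{pr}\,\|H'-H\|_F$, this costs at most another constant times $\sqrt{pr+1}$. Inverting the resulting bound gives $\dist(H,\mathcal K^{\R}_{p,r})\ge\delta\Rightarrow\eta\ge\delta^2/(100(p+r+1)(pr+1))$. Taking the infimum over all estimators yields Eq.~\eqref{eq:approximate-attainability-gap}.

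\emph{Step 4 (hardness consequence).} In YES instances of weak membership with inverse-polynomial $\delta$, $H$ is separable. Theorem~\ref{thm:sep}, Proposition~\ref{lem:quantum-model} and Theorem~\ref{theorem:finitecriterion} then give exact attainability. In NO instances, Eq.~\eqref{eq:approximate-attainability-gap} gives $\Delta(H)\ge\varepsilon$ with $\varepsilon$ inverse-polynomial.

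The main obstacle is Step 1 combined with the constant-tracking in Step 2. Two things must be checked carefully. First, the Cramér--Rao defect really controls $\sum_\alpha\|(M_\alpha-L_\alpha)\sqrt{\rho_0}\|_F^2$ with the stated $\|\mathcal{J}_\text{Q}\|^2$ scaling, including the choice of kernel block. Second, the dilation does not spoil the $\rho_0$-weighted bounds; it does not, because $\rho_0$ vanishes on the ancilla. I would first verify the exact case $\eta=0$, where $M_\alpha=L_\alpha$ must hold on $\supp\rho_0$, and then perturb.
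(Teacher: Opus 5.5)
Your architecture is the paper's. You use the surrogates $M_\alpha=\sum_\beta(\mathcal{J}_\text{Q})_{\alpha\beta}X_\beta$ from a Naimark dilation. You use the identity $\sum_\alpha\Tr[\rho_0(M_\alpha-L_\alpha)^2]=\Tr[\mathcal{J}_\text{Q}Q\mathcal{J}_\text{Q}]\le\|\mathcal{J}_\text{Q}\|^2\eta$. This holds for every kernel block, since only the support columns of $L_\alpha$ enter, so nothing needs to be chosen. Then come the real-separable candidate $H'$ and trace renormalisation.

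\textbf{Main gap: the quantitative closing.} This is exactly what the displayed inequality with $100(p+r+1)(pr+1)$ asserts, and your Steps~2--3 do not deliver it.
\begin{enumerate}
\item Your intermediate estimate $\|H'-H\|_F\le c\sqrt{(p+r+1)(pr+1)\eta}$, combined with the further factor of order $\sqrt{pr+1}$ you charge for normalisation, inverts to $\eta\gtrsim\delta^2/\bigl((p+r+1)(pr+1)^2\bigr)$. That is not the stated bound.
\item Your size bounds lose a factor growing with $p$. The bound $\|L_\alpha P^\dagger\|_F^2\le r'\Tr[\rho_0L_\alpha^2]=2+\|B_\alpha\|_F^2$ counts $\Tr T_\alpha^2=2$ from $\mathcal T$, which $L_\alpha P^\dagger$ never sees. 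Summing over $\alpha$ then gives $2p+1$ instead of $1$. Bounding $\|M_\alpha P^\dagger\|_F$ through $(\mathcal{J}_\text{Q}V\mathcal{J}_\text{Q})_{\alpha\alpha}$ has the same defect.
\end{enumerate}
Your route still yields some inverse-polynomial gap, and hence the hardness conclusion, but not the stated inequality.

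\textbf{Fix.} Keep only the $\mathcal S$-columns and sum all blocks at once. Set $\mathbf L=[L_1P^\dagger,\dots,L_pP^\dagger]$ and $\mathbf M=[M_1P^\dagger,\dots,M_pP^\dagger]$.
\begin{itemize}
\item $\|L_\alpha P^\dagger\|_F=\|B_\alpha\|_F$ exactly, so $\|\mathbf L\|_F^2=\Tr H=1$.
\item Using $\|\mathcal{J}_\text{Q}\|\le(p+2)/r'\le1$, you get $\tau:=\|\mathbf M-\mathbf L\|_F\le\sqrt{r'\eta}$ and $\|\mathbf M\|_F\le1+\tau$.
\item Your two-term splitting is $\mathbf M^*\mathbf M-\mathbf L^*\mathbf L=\mathbf M^*(\mathbf M-\mathbf L)+(\mathbf M-\mathbf L)^*\mathbf L$. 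Hence $\|H'-H\|_F\le(2+\tau)\tau$, with no $pr$ factor.
\item The only $\sqrt{pr}$ enters through $|\Tr H'-1|\le\sqrt{pr}\,\|H'-H\|_F$. This gives $\dist(H,\mathcal K^{\R}_{p,r})\le2(1+\sqrt{pr})(2+\tau)\tau<6(1+\sqrt{pr})\tau<\delta$ once $\eta<\delta^2/(100r'(pr+1))$.
\end{itemize}

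\textbf{Second gap: Step~4.} You cannot invoke Gharibian's theorem verbatim. It concerns complex separability of general bipartite states. You need inverse-polynomial weak-membership hardness for the real body $\mathcal K^{\R}_{p,r}$, on trace-one inputs in $\Sym_p(\R)\otimes\Sym_r(\R)$. The paper supplies this with three ingredients: the projection $X\mapsto\frac14(X+X^{T_A}+X^{T_B}+X^T)$, the fact that Gharibian's objectives lie in that subspace, and the separable ball around $I/pr$. You should also reject non-PSD queries before building $\mathcal M_H$.
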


\begin{proof}
Put $r'=p+r+1$. Recall the model $\mathcal M_H$ from the proof of Proposition~\ref{lem:quantum-model} is defined on the Hilbert space $\mathcal H=(\mathcal T\oplus\mathcal S)\oplus\mathcal N\cong \C^{p+1+r+q}$ [Eq.~\eqref{eqn:space-decomposition}], with $\mathcal S$ encoding the information about $H$, and with $\rho_0:=\frac{1}{r'}I_{\mathcal T\oplus \mathcal S}\oplus\mathbf0_{\mathcal N}$.
We prove Eq.~\eqref{eq:approximate-attainability-gap} by showing that if $\Delta(H)<\delta^2/(100r'(pr+1))$, then $\dist(H,\mathcal K^{\R}_{p,r})<\delta$. 
If $\Delta(H)<\delta^2/(100r'(pr+1))$, there is some locally unbiased estimator $(\Pi,\hat{\theta})$ such that
\begin{equation}
     Q=V-\mathcal{J}_\text{Q}^{-1},
  \qquad
  \eta=\Tr [Q]<\frac{\delta^2}{100r'(pr+1)}\;.
\end{equation}
Note that $Q\succeq0$. Intuitively, $\eta$ measures how far the estimator is from the QCRB. By Proposition~\ref{prop:finite-support}, we can restrict to a finite outcome POVM. After a finite Naimark
dilation, any POVM can be written as a projective measurement $\{P_a\}$. Therefore, the matrices $(Z_j=\sum_a\hat{\theta}_j(a)P_a)$
can be written as commuting Hermitian matrices
$Z_1,\ldots,Z_p$ satisfying
\begin{equation}
      \langle Z_i,L_j\rangle_{\rho_0}=\delta_{ij},
  \qquad
  \langle Z_i,Z_j\rangle_{\rho_0}=V_{ij},\qquad \langle L_i,L_j\rangle_{\rho_0}=\mathcal{J}_{\text{Q},ij}
\end{equation}
where
\begin{equation}
      \langle X,Y\rangle_{\rho_0}
  :=
  \frac12\Tr\!\bigl[\rho_0(XY+YX)\bigr].
\end{equation}
Let
\begin{equation}
   L^i=\sum_j(\mathcal{J}_\text{Q}^{-1})_{ij}L_j,
  \qquad
  A_i=\sum_j\mathcal{J}_{\text{Q},ij}Z_j\;,
\end{equation}
so that $\langle L^i,L_j\rangle_{\rho_0}=\delta_{ij}$ and $\langle L^i,L^j\rangle_{\rho_0}=(\mathcal{J}_{\text{Q}}^{-1})_{ij}$. The matrices $A_i$ commute, and
\begin{equation}
  \langle Z_i-L^i,Z_j-L^j\rangle_{\rho_0}=V_{ij}-(\mathcal{J}_{\text{Q}}^{-1})_{ij}=Q_{ij}\;.    
\end{equation}

Let $P$ be the projection of $\rho_0$ onto its support $\mathcal T\oplus\mathcal S$. Since $\rho_0=P/r'$, for any Hermitian matrix $B$ we have $\langle B,B\rangle_{\rho_0}=||BP||_{\mathrm F}^2/r'$, where the norm $\|\cdot\|_{\mathrm{F}}$ is the Frobenius norm. Therefore, we can write
\begin{equation}
     \begin{split}
             \sum_i\norm{(A_i-L_i)P}_{\mathrm F}^2
  &=r'\sum_i \langle A_i-L_i,A_i-L_i\rangle_{\rho_0}\\
    &=r'\sum_{i,j,k} \langle \mathcal{J}_{\text{Q},ij}Z_j-\mathcal{J}_{\text{Q},ij}L^j,\mathcal{J}_{\text{Q},ik}Z_k-\mathcal{J}_{\text{Q},ik}L^k\rangle_{\rho_0}\\
    &=r'\sum_{i,j,k} \mathcal{J}_{\text{Q},ij}\mathcal{J}_{\text{Q},ik}\langle Z_j-L^j,Z_k-L^k\rangle_{\rho_0}\\ &=r'\sum_{i,j,k} \mathcal{J}_{\text{Q},ij}Q_{jk}\mathcal{J}_{\text{Q},ik}\\
  &=r'\sum_{i,j,k}
\mathcal{J}_{\mathrm Q,ij}Q_{jk}\mathcal{J}_{\mathrm Q,ki}\\&=r'\Tr[\mathcal{J}_\text{Q}Q\mathcal{J}_\text{Q}].
     \end{split}
\end{equation}
Equation~\eqref{eq:QFIfineq} gives
\begin{equation}
  \mathcal{J}_\text{Q}=\frac1{r'}\bigl(I+\one\one^T+G\bigr),
  \qquad
  G\succeq0,
  \qquad
  \Tr [G]=\Tr [H]=1.
\end{equation}
Hence $\norm{\mathcal{J}_\text{Q}}_{\mathrm{op}}\le1$ ($||A||_{\text{op}}$ is the operator norm), and therefore
\begin{equation}
  \sum_i\norm{(A_i-L_i)P}_{\mathrm F}^2
  \le r'\norm{\mathcal{J}_\text{Q}}_{\mathrm{op}}^2\Tr[Q]\le r'\eta.
  \label{eq:approximate-sld-bound}
\end{equation}
This tells us that if the covariance is close to \(\mathcal{J}_\text{Q}^{-1}\), then the commuting \(A_i\)'s are jointly close to the SLD operators \(L_i\).
Since the commuting $A_i$'s will produce a separable tensor $\widetilde H$,  their closeness to the $L_i$'s will imply that $H$ must also be close to the separable $\widetilde H$.

Let $R:\mathcal H\to\mathcal S$ be the projection onto the subspace $\mathcal S$ which encodes the model $H$, and set
\begin{equation}
     X=[\,A_1R\ \cdots\ A_pR\,],
  \qquad
  Y=[\,L_1R\ \cdots\ L_pR\,]\;.
\end{equation}
Then the $(i,j)$ block of $Y^*Y$ is $H_{ij}$ using Eq.~\eqref{eq:initialsld} and $H_{ij}=B_iB_j^T$, and so we see 
\begin{equation}
      Y^*Y=H,
  \qquad
  \norm{Y}_{\mathrm F}^2=\Tr H=1\;.
\end{equation}
Moreover, \eqref{eq:approximate-sld-bound} implies
\begin{equation}
    \tau:=\norm{X-Y}_{\mathrm F}\le\sqrt{r'\eta}\;.  
\end{equation}

Since the $A_i$ commute, the $(\Leftarrow)$ direction of the proof
of Theorem~\ref{thm:sep} shows that
\begin{equation}
      \widetilde H:=\Re(X^*X)\in\Sep^{\R}_{p,r}.
\end{equation}
While $\widetilde H$ is not normalized to have trace 1, we can normalize it to obtain $\widetilde H_1:=\widetilde H/\Tr\widetilde H\in\mathcal K_{p,r}^\R$. 
Since $\widetilde H$ is close to the trace-1 $H$, we will show that $\Tr[\widetilde H]$ is close to 1 and so $H$ is close to $\widetilde H_1\in\mathcal K_{p,r}^\R$.
Let $t=\Tr[\widetilde H]$, so that $|t-1|=\left|\Tr [\widetilde H-H]\right|$. For an $n\times n$ matrix $M$ we have $|\Tr[M]|\le\sqrt{n}\norm{M}_\text{F}$. As $H$ is a $pr\times pr$ matrix, then
\begin{equation}\label{eqn:trace-dist-1}
      |t-1|
  \le
  \sqrt{pr}\|\widetilde H-H\|_{\mathrm F},
\end{equation}
and we can estimate
\begin{equation}
\begin{split}
     s:=\norm{\widetilde H-H}_{\mathrm F}
  &\le
  \norm{X^*X-Y^*Y}_{\mathrm F}\\
  &=\norm{X^*(X-Y)+(X^*-Y^*)Y}_{\mathrm F}\\
  &\le (\norm{X}_{\mathrm F}+\norm{Y}_{\mathrm F})\norm{X-Y}_{\mathrm F}\\&\le
  (2+\tau)\tau\;,
\end{split}
  \label{eq:approximate-gram-bound}
\end{equation}
where in the last line we have used $\norm{X}_{\mathrm F}\le\norm{Y}_{\mathrm F}+\norm{X-Y}_{\mathrm F}=1+\tau$.

Recall that
\begin{equation}
      \eta<
  \frac{\delta^2}{100r'(pr+1)}\;.
\end{equation}
Then
\begin{equation}
   \tau<\frac{\delta}{10\sqrt{pr+1}}\le\frac1{10},\qquad  s<3\tau\;, 
\end{equation}
as $\delta\le1$.
From Eq.~\eqref{eqn:trace-dist-1}, we see
\begin{equation}
      |t-1|
  \le
  \sqrt{pr}\,s\le3\sqrt{pr}\tau\le \frac{3\delta\sqrt{pr}}{10\sqrt{pr+1}}
  <
  \frac3{10},
\end{equation}
so $t>1/2$. 
Using $\norm{H}_{\mathrm F}\le1$, we obtain
\begin{equation}
\begin{split}
     \dist(H,\mathcal K^{\R}_{p,r})
  &\le
  \norm{\widetilde H/t-H}_{\mathrm F}    \\
  &=\norm{\frac{\widetilde H-H}{t}+\left(\frac{1}{t}-1\right)H}_{\mathrm F} \\
  &\le\frac{s}{t}+\frac{|1-t|}{t}\norm{H}_{\mathrm F}\\
  &\le 2s+2|1-t|
  \\&\le
  2(1+\sqrt{pr})s  \\
  &<6(1+\sqrt{pr})\tau   \\
  &<\delta,
\end{split}
\end{equation}
where the last inequality follows from $6(1+\sqrt{pr})<10\sqrt{pr+1}$. This proves \eqref{eq:approximate-attainability-gap}.

Gharibian proves that weak membership for separability is \np-hard for
an inverse-polynomial tolerance $\beta$ \cite{gharibian2010strong}. 
We now show that the inverse-polynomial weak-membership hardness of
Ref.~\cite{gharibian2010strong} also holds for
$\mathcal K^{\mathbb R}_{p,r}$. To see this, put
$\mathcal W=\operatorname{Sym}_p(\mathbb R)\otimes
\operatorname{Sym}_r(\mathbb R)$.
The projection onto $\mathcal W$, $\Pi(X)=\tfrac14(X+X^{T_A}+X^{T_B}+X^T),$ maps the set of complex separable states onto the real separable set
$\mathcal K^{\mathbb R}_{p,r}$. On product states it acts as $\Pi(zz^\dagger\otimes ww^\dagger)
=\operatorname{Re}(zz^\dagger)\otimes
 \operatorname{Re}(ww^\dagger)$, whose factors are real positive semidefinite matrices. Consequently, the set of real separable states is the intersection of the set of complex separable states with $\mathcal W$, and every objective in $\mathcal W$ has the same maximum over both bodies i.e. for $C\in\mathcal W$, $\Tr[CX]=\Tr[\Pi(C)X]=\Tr[C\Pi(X)]$. Hence, for such objectives, optimisation over $X$ can be restricted to optimisation over $X\in\mathcal W$. Gharibian's hard objectives have real symmetric blocks and belong to $\mathcal W$, so their optimal values are unchanged by this restriction and admit real product optimizers.

Moreover, every trace-one matrix in $\mathcal W$ within Frobenius
distance $1/\sqrt{pr(pr-1)}$ of $I_{pr}/pr$ is complex separable (see Theorem 6.4 of~\cite{gurvits2004classical}),
and hence real separable by the characterization above. Gharibian's optimization-to-membership reduction is formulated
in Euclidean coordinates. We use the independent entries of a
trace-one matrix in $\mathcal W$ as rational coordinates. If $x,y$
represent matrices $X,Y$, then
\begin{equation}
    \|x-y\|_2\leq\|X-Y\|_{\mathrm F}\leq\sqrt{\max\{pr,4\}}\,\|x-y\|_2\;.
\end{equation}
Thus, the ball used in Gharibian's work still has inverse-polynomial radius in these coordinates, and the reduction gives inverse-polynomial weak-membership hardness for $\mathcal K^{\mathbb R}_{p,r}$.

Given a promised weak-membership instance $(H,\beta)$, first answer
NO directly if $H\not\succeq0$.  Otherwise construct $\mathcal M_H$ and
set
\begin{equation}
  \varepsilon
  =
  \frac{\beta^2}
       {100(p+r+1)(pr+1)}.
\end{equation}
Recalling Eq.~\eqref{eq:defweaksep}, in the YES case,
$H\in S(\mathcal K^{\R}_{p,r},-\beta)
  \subseteq\mathcal K^{\R}_{p,r}$,
so Theorems~\ref{theorem:finitecriterion} and
\ref{th:quantum-encoding} give exact attainability.  In the NO case,
$H\notin S(\mathcal K^{\R}_{p,r},\beta)$,
and hence $\dist(H,\mathcal K^{\R}_{p,r})>\beta$.  Equation
\eqref{eq:approximate-attainability-gap} then gives
\begin{equation}
  \Tr[V-\mathcal{J}_\text{Q}^{-1}]\ge\varepsilon
\end{equation}
for every locally unbiased estimator.  Since $\beta$ is inverse
polynomial, so is $\varepsilon$, and the construction is polynomial.
\end{proof}

\subsection{Polynomial certificates for approximate attainability}
\label{subsec:approx-np-complete}
Having established $\np$-hardness, we now show that the problem in Definition~\ref{def:encoded-gap-attainability} belongs to $\sans{PromiseNP}$. We will show that the gap allowed in Definition~\ref{def:encoded-gap-attainability} can be used to give a polynomial-time certificate for approximate separability of $H$. Essentially, the gap will allow replacing a general (non-rational) separable decomposition of $H$ with a rational separable approximation, which can then serve as the polynomial-size certificate for $\Delta(H)=0$.

First, we prove stability in the other direction compared to Eq.~\eqref{eq:approximate-attainability-gap}. This will later be used to show there is no such certificate for $\Delta(H)\ge\varepsilon$.

\begin{lemma}
\label{lem:upper-stability}
Consider the setting of Definition~\ref{def:encoded-gap-attainability}, and put $r'=p+r+1$. If
\begin{equation}
  \dist(H,\mathcal K^{\R}_{p,r})\le\xi
  \qquad\text{and}\qquad
\xi\le\frac{1}{4r'\sqrt{pr}},
\end{equation}
then
\begin{equation}
  \Delta(H)\le4r'\sqrt{pr}\,\xi.
\end{equation}
\end{lemma}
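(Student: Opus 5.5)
The plan is to reverse the argument of Proposition~\ref{prop:approximate-attainability}. We pick a separable tensor close to $H$, build from it commuting observables via the $(\Rightarrow)$ direction of Theorem~\ref{thm:sep}, and show that they are jointly close to the SLD operators $L_i^{(0)}$ in the $\rho_0$-norm. A locally unbiased estimator built from them then has small excess $\Tr[V-\mathcal{J}_\text{Q}^{-1}]$.

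\emph{Step 1 (nearby separable tensor and aligned factor).} By compactness of $\mathcal K^{\R}_{p,r}$, choose $\hat H\in\mathcal K^{\R}_{p,r}$ with $\|H-\hat H\|_{\mathrm F}\le\xi$. Write $\hat H=\hat B\hat B^T$ with $\hat B=\hat H^{1/2}$, and $H=BB^T$ with the rational factor $B$. The key point is that, after padding with zero columns and applying an orthogonal $U$ chosen by the polar decomposition of $\hat B^T B$, we get
\begin{equation}
  \min_U\|\hat B U-[B\;\mathbf 0]\|_{\mathrm F}^2=\|\hat H^{1/2}-H^{1/2}\|_{\mathrm F}^2\le\|\hat H-H\|_1\le\sqrt{pr}\,\xi .
\end{equation}
The middle inequality is the Powers--St{\o}rmer inequality. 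This makes the \emph{squared} distance linear in $\xi$, which is exactly the scaling claimed in Lemma~\ref{lem:upper-stability}.

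\emph{Step 2 (commuting observables).} Run the $(\Rightarrow)$ construction of Theorem~\ref{thm:sep} for $\hat H$. It yields commuting real $\hat C_\alpha=\begin{psmallmatrix}0&[\hat B_\alpha U\;\mathbf 0]\\ \ast&\hat K_\alpha\end{psmallmatrix}$ on $\mathcal S\oplus(\text{enlarged kernel})$, where the last orthogonal rotation in that proof is taken so that the off-diagonal blocks are the Procrustes-aligned $\hat B_\alpha U$ rather than $B_\alpha$. Then set $A_\alpha=T_\alpha\oplus\hat C_\alpha$ on a finite extension $\mathcal H\oplus\mathcal A$. These matrices commute. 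Moreover $(A_\alpha-L_\alpha)P$ is supported only on the support--kernel block, so $\sum_\alpha\|(A_\alpha-L_\alpha)P\|_{\mathrm F}^2\le 2\sqrt{pr}\,\xi$, where $L_\alpha$ is any SLD extension and $P$ is the support projection.

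\emph{Step 3 (estimator and excess).} Let $M_{ij}=\langle A_i,L_j\rangle_{\rho_0}$ and $Z_i=\sum_j(M^{-1})_{ij}A_j$. The $Z_i$ commute, so their joint spectral measure gives a finite projective measurement (hence a POVM on $\mathcal H$ after compression) with estimator values given by eigenvalues. They satisfy $\langle Z_i,L_j\rangle_{\rho_0}=\delta_{ij}$, i.e.\ local unbiasedness, with $V_{ij}=\langle Z_i,Z_j\rangle_{\rho_0}$. Following the identity used in the proof of Proposition~\ref{prop:approximate-attainability}, we have $\Tr[V-\mathcal{J}_\text{Q}^{-1}]=\sum_i\|Z_i-L^i\|_{\rho_0}^2$. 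We write $M=\mathcal{J}_\text{Q}+E$ with $\|E\|\le\frac{1}{r'}\sum\|(A-L)P\|\,\|LP\|$, which is of order $\sqrt{\xi}$. Then $Z-L^{\cdot}=M^{-1}(A-L)+(M^{-1}-\mathcal{J}_\text{Q}^{-1})L$. The second term must be controlled by projecting: the component of $A-L$ along $\operatorname{span}\{L_j\}$ is exactly what $M^{-1}$ corrects. So the excess equals the squared $\rho_0$-norm of the part of $M^{-1}(A-L)$ orthogonal to that span, and is bounded by $\|M^{-1}\|^2\sum_\alpha\|A_\alpha-L_\alpha\|^2_{\rho_0}$. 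We then use $\mathcal{J}_\text{Q}\succeq I/r'$ and the hypothesis $\xi\le 1/(4r'\sqrt{pr})$ to keep $M$ well-conditioned, namely $\|M^{-1}\|\le 2r'$ up to constants. Combined with $\|\cdot\|^2_{\rho_0}=\|\cdot P\|^2_{\mathrm F}/r'$, this yields $\Delta(H)\le 4r'\sqrt{pr}\,\xi$ after bookkeeping.

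The main obstacle is Step 3. The naive bound picks up the cross term $(M^{-1}-\mathcal{J}_\text{Q}^{-1})L$, which is only $O(\sqrt{\xi})$. I will try first to show that the excess is a Schur-complement / residual quantity that is quadratic in $A-L$, hence linear in $\xi$. The constants, including whether $\|M^{-1}\|$ contributes $r'$ or $r'^2$, need care, and may require a sharper conditioning estimate than the one sketched.
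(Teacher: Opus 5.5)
Your route is the paper's own: Powers--St{\o}rmer alignment of the factors, Theorem~\ref{thm:sep} applied to the nearby separable $\hat H$ with the aligned factor, and the estimator $M^{-1}a$ on the joint spectrum of the commuting $A_i$ (the paper's $G^{-1}a$). The excess is then bounded by $\|M^{-1}\|_{\mathrm{op}}^2\sum_\alpha\|A_\alpha-L_\alpha\|_{\rho_0}^2$. The Step~3 ``obstacle'' is already closed by your own orthogonality remark. Let $\Pi_L$ be the $\langle\cdot,\cdot\rangle_{\rho_0}$-orthogonal projection onto $\operatorname{span}\{L_j\}$. Then $\langle Z_i-L^i,L_j\rangle_{\rho_0}=0$ and $(I-\Pi_L)L^i=0$ give $Z_i-L^i=\sum_j(M^{-1})_{ij}(I-\Pi_L)(A_j-L_j)$. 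This is precisely the paper's identity $V-\mathcal J_{\mathrm Q}^{-1}=G^{-1}(\Sigma-\Gamma\mathcal J_{\mathrm Q}^{-1}\Gamma^T)G^{-T}$, so no $O(\sqrt\xi)$ cross term survives.

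What remains are two constant-level slips that, as written, prevent you from reaching the stated bound.

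\emph{The factor $2$ in Step~2 is spurious.} Right multiplication by $P$ keeps only the kernel--support block, so $\sum_\alpha\|(A_\alpha-L_\alpha)P\|_{\mathrm F}^2=\|\hat BU-[B\;\mathbf 0]\|_{\mathrm F}^2=:b^2\le\sqrt{pr}\,\xi$. With the extra $2$, even the best conditioning estimate gives only $r'\alpha\le1/\sqrt2$ (where $\alpha^2=b^2/r'$), and a final constant near $23$ rather than $4$.

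\emph{The entrywise bound on $E$ is too weak.} The estimate $|E_{ij}|\le\frac1{r'}\|(A_i-L_i)P\|_{\mathrm F}\|L_jP\|_{\mathrm F}$ only yields $r'\|E\|_{\mathrm{op}}\le b\sqrt{2p+1}$, which does not give $\le\frac12$ once $p>r$. Instead apply Cauchy--Schwarz to linear combinations. For unit $x,y$, $|x^TEy|\le\bigl\|\sum_ix_i(A_i-L_i)\bigr\|_{\rho_0}\sqrt{y^T\mathcal J_{\mathrm Q}y}\le\alpha$, using $\mathcal J_{\mathrm Q}\preceq I$; this follows from $\mathcal J_{\mathrm Q}=\frac1{r'}(I+\mathbf 1\mathbf 1^T+G)$ with $\Tr G=\Tr H=1$. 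The hypothesis on $\xi$ then gives $r'\alpha=\sqrt{r'}\,b\le\frac12$. Hence $\|M^{-1}\|_{\mathrm{op}}\le 2r'$ and $\Delta(H)\le 4r'^2\alpha^2=4r'b^2\le4r'\sqrt{pr}\,\xi$.

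Two minor points. The equality in Step~1 is only an inequality: the Procrustes minimum is $\Tr H+\Tr\hat H-2\|\hat H^{1/2}H^{1/2}\|_1\le\|\hat H^{1/2}-H^{1/2}\|_{\mathrm F}^2$, which is all you use. Local unbiasedness also needs $\Tr[\rho_0A_i]=\Tr[T_i]/r'=0$.
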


\begin{proof}
Choose $\widehat H\in\mathcal K^{\R}_{p,r}$ such that
\begin{equation}
      \norm{H-\widehat H}_{\mathrm F}\le\xi\;.
\end{equation}
Let $H=BB^T$ be the rational factorization used in the construction of
$\mathcal M_H$. Note that this is not unique: we can pad $B$ such that $B\in\mathbb{R}^{pr\times Q}$ where $Q\geq\max(pr,q)$. Then there exists $U\in\mathbb{R}^{pr\times Q}$ such that $UU^T=I_{pr}$ and $B=H^{1/2}U$. After adjoining zero columns, one may choose a 
factorization $\widehat H=\widehat B\widehat B^T$ where $\widehat B=\widehat H^{1/2}U$ such that
\begin{equation}
  b^2
  :=
  \norm{B-\widehat B}_{\mathrm F}^2
  \le
  \norm{H-\widehat H}_1
  \le
  \sqrt{pr}\,\xi\;,
  \label{eq:aligned-gram-bound}
\end{equation}
where we use the fact that $\norm{H^{1/2}-\widehat H^{1/2}}_\text{F}^2\le \norm{H-\widehat H}_1$.

Since $\widehat H$ is real separable, Theorem~\ref{thm:sep} gives
commuting Hermitian matrices $\tilde{A}_1,\ldots,\tilde{A}_p$ whose prescribed
support--kernel blocks are $\widehat B_1,\ldots,\widehat B_p$. The support block of each $\tilde{A}_i$ is $T_i$. Let
$L_i$ be the SLDs of the model $\mathcal M_H$ and put
\begin{equation}
      E_i=\tilde{A}_i-L_i.
\end{equation}
$\tilde{A}_i$ and $L_i$ have identical support--support blocks corresponding to $\mathcal T\oplus\mathcal S$, and therefore
\begin{equation}
     \sum_i\norm{E_iP}_{\mathrm F}^2=b^2.
\end{equation}
Since $\rho_0=P/r'$,
\begin{equation}
     \alpha^2
  :=
  \sum_i\langle E_i,E_i\rangle_{\rho_0}
  =
  \frac{b^2}{r'}.
\end{equation}
Define
$\Gamma_{ij}=\langle E_i,L_j\rangle_{\rho_0}$ and $\Sigma_{ij}=\langle E_i,E_j\rangle_{\rho_0}$.
For the encoded model, $\frac1{r'} I\preceq \mathcal{J}_\text{Q}\preceq I$ follows from Eq.~\eqref{eq:QFIfineq}. For unit vectors $x,y\in\R^p$, Cauchy--Schwarz gives
\begin{equation}
    \begin{split}
        |x^T\Gamma y|
&=
\left|
\left\langle
\sum_i x_iE_i,\,
\sum_j y_jL_j
\right\rangle_{\rho_0}
\right| \\
&\le\sqrt{\left\langle\sum_i x_iE_i,\sum_k x_kE_k\right\rangle_{\rho_0}\left\langle
\sum_j y_jL_j,\sum_l y_lL_l\right\rangle_{\rho_0}}
\le
\alpha\,\sqrt{y^T\mathcal{J}_\text{Q}y}
\le\alpha\;.
    \end{split}
\end{equation}
 Hence $\|\Gamma\|_{\mathrm{op}}\le\alpha$. Moreover, $\mathcal{J}_\text{Q}\succeq (r')^{-1}I$ implies
\begin{equation}
    \|\mathcal{J}_\text{Q}^{-1}\|_{\mathrm{op}}
=
\lambda_{\max}(\mathcal{J}_\text{Q}^{-1})
=
\frac{1}{\lambda_{\min}(\mathcal{J}_\text{Q})}
\le r'.
\end{equation}

By \eqref{eq:aligned-gram-bound} and the assumption on $\xi$,
$r'\alpha=\sqrt{ r'}\,b\le\frac12.$
Hence $G:=\mathcal{J}_\text{Q}+\Gamma=\mathcal{J}_\text{Q}(I+\mathcal{J}_\text{Q}^{-1}\Gamma)$ is invertible and
\begin{equation}
  \norm{G^{-1}}_{\mathrm{op}}\le2r'.
  \label{eq:G-inverse-bound}
\end{equation}

As the $\tilde{A}_i$ commute, they can be measured via some projective measurement. If $a=(a_1,\ldots,a_p)^T$ is the joint eigenvalue vector, use
the estimator $\hat{\theta}=G^{-1}a$. Since $\Tr[\rho_0\tilde{A}_i]=0$ (as $\tilde{A}_i$ and $L_i$ have identical support-support blocks corresponding to $\mathcal{T}\oplus\mathcal{S}$) and
\begin{equation}
  G_{ij}=\langle E_i+L_i,L_j\rangle_{\rho_0}
  =
  \langle \tilde{A}_i,L_j\rangle_{\rho_0}
  =
  \Tr[\rho_j\tilde{A}_i],
\end{equation}
this estimator is locally unbiased (as $G^{-1}G=I$).

Let $K_{ij} =\langle \tilde{A}_i,\tilde{A}_j\rangle_{\rho_0}.$
The covariance of this measurement is
$V=G^{-1}KG^{-T}.$ Since $\tilde{A}_i=L_i+E_i$,
\begin{equation}
K=\mathcal{J}_\text{Q}+\Gamma+\Gamma^T+\Sigma.
\end{equation}
Using $G\mathcal{J}_\text{Q}^{-1}G^T=\mathcal{J}_\text{Q}+\Gamma+\Gamma^T+\Gamma\mathcal{J}_\text{Q}^{-1}\Gamma^T$, it follows that
\begin{equation}
  K-G\mathcal{J}_\text{Q}^{-1}G^T
  =
  \Sigma-\Gamma \mathcal{J}_\text{Q}^{-1}\Gamma^T
  \succeq0.
\end{equation}
Positive-semidefiniteness follows as
$
    \begin{pmatrix}
        \Sigma&\Gamma\\
        \Gamma^T&\mathcal{J}_\text{Q}
    \end{pmatrix}\succeq0.
$
Therefore
\begin{equation}
  V-\mathcal{J}_\text{Q}^{-1}
  =
  G^{-1}
  \bigl(\Sigma-\Gamma \mathcal{J}_\text{Q}^{-1}\Gamma^T\bigr)
  G^{-T}.
\end{equation}
Using \eqref{eq:G-inverse-bound},
\begin{equation}
\begin{aligned}
  \Tr[V-\mathcal{J}_\text{Q}^{-1}]
  &\le
  \norm{G^{-1}}_{\mathrm{op}}^2\Tr[\Sigma]\\
  &\le
  4(r')^2\alpha^2\\
  &=
  4r'b^2\\
  &\le
  4r'\sqrt{pr}\,\xi.
\end{aligned}
\end{equation}
Taking the infimum over locally unbiased estimators proves the result.
\end{proof}

Proposition~\ref{prop:approximate-attainability} shows that $\dist(H,\mathcal K^{\R}_{p,r})\ge\delta$ implies $\Delta(H)\ge\frac{\delta^2} {100(p+r+1)(pr+1)}$. Conversely, Lemma~\ref{lem:upper-stability} shows that $ \dist(H,\mathcal K^{\R}_{p,r})\le\xi$
with $\xi\le1/(4r'\sqrt{pr})$ implies $\Delta(H)\le4r'\sqrt{pr}\,\xi.$ Proposition~\ref{prop:approximate-attainability} shows that the approximate attainability problem is \np-hard. We now wish to show that it is also $\sans{PromiseNP}$. For this we require that every YES instance has a polynomial-size certificate that a deterministic polynomial-time verifier accepts, while NO instances do not have such a certificate.

\begin{theorem}[Complexity of approximate attainability on the encoded model]
\label{thm:gap-promise-np-complete}
The gap attainability problem of
Definition~\ref{def:encoded-gap-attainability} is \np-hard under polynomial-time
Turing reductions, and belongs to
$\sans{PromiseNP}$.
\end{theorem}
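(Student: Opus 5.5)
The proof splits into two parts: hardness, which is already in hand, and membership in $\sans{PromiseNP}$, which needs a certificate.

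\emph{Hardness.} This comes directly from Proposition~\ref{prop:approximate-attainability}. Take a promised weak-membership instance $(H,\beta)$ for $\mathcal K^{\R}_{p,r}$ with $\beta$ inverse polynomial, which is hard by Gharibian's result transferred to the real cone. We reject non-PSD inputs and build $\mathcal M_H$. We then set $\varepsilon=\beta^2/(100(p+r+1)(pr+1))$ and pass the instance to the gap-attainability oracle. Proposition~\ref{prop:approximate-attainability} places YES instances in YES and NO instances in NO, and both $\mathcal M_H$ and $\varepsilon$ have polynomial bit length.

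\emph{Membership.} The certificate for a YES instance will be a rational approximate separable decomposition of $H$. Concretely, it is a list $(z_k,w_k)_{k\le m}$ of rational vectors, where $m$ is polynomial (say $m\le (pr)^2+1$). The verifier forms $\widehat H=\sum_k z_kz_k^T\otimes w_kw_k^T$, which is real separable by construction, with nonnegative trace. It normalizes $\widehat H$ to trace one. It then computes $\|H-\widehat H\|_{\mathrm F}^2$ exactly in rational arithmetic and accepts iff this is at most $\xi^2$. The threshold $\xi$ is chosen so that Lemma~\ref{lem:upper-stability} and Proposition~\ref{prop:approximate-attainability} separate the two cases. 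Since the certificate gives $\dist(H,\mathcal K^{\R}_{p,r})\le\xi$, soundness requires that this be impossible when $\Delta(H)\ge\varepsilon$. Lemma~\ref{lem:upper-stability} does exactly this: if $\xi\le 1/(4r'\sqrt{pr})$ and $4r'\sqrt{pr}\,\xi<\varepsilon$, then distance at most $\xi$ forces $\Delta(H)<\varepsilon$. So no NO instance has an accepted certificate. We therefore take $\xi=\varepsilon/(8r'\sqrt{pr})$, which is rational and inverse polynomial when $\varepsilon$ is.

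\emph{Completeness, the main obstacle.} In a YES instance we have exact attainability. By Theorems~\ref{theorem:finitecriterion} and~\ref{th:quantum-encoding}, $H\in\Sep^{\R}_{p,r}$, and since $\Tr H=1$, $H\in\mathcal K^{\R}_{p,r}$. By Carathéodory, $H$ is a convex combination of at most $\dim\mathcal W+1$ unit product terms $z_kz_k^T\otimes w_kw_k^T$. These terms may be irrational, so the plan is to round. Each entry of $z_k,w_k$ (all of norm $1$) is rounded to a rational with $O(\log(pr/\xi))$ bits. The square roots of the convex weights are rounded as well and absorbed into $z_k$. Each product term is Lipschitz in $(z_k,w_k)$ on the unit ball with constant $O(1)$ in Frobenius norm, so the total error is at most $m\cdot O(\text{rounding})$, and renormalizing the trace at most doubles it. Choosing polynomially many bits therefore gives $\|H-\widehat H\|_{\mathrm F}\le\xi$, and the certificate has polynomial size.

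The delicate points are these: the accept test must be checked exactly (squared norms avoid square roots); the verifier must also check PSD of nothing beyond the product form, since separability of $\widehat H$ is automatic; and the constants must be tracked so that rounding stays within $\xi$. Combining completeness, soundness and hardness gives the theorem.
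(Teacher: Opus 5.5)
Your proposal is correct and follows essentially the same route as the paper. Hardness is taken directly from Proposition~\ref{prop:approximate-attainability}, and membership rests on a Carath\'eodory-plus-rational-rounding certificate $\widehat H$ for $H\in\mathcal K^{\R}_{p,r}$, with soundness from Lemma~\ref{lem:upper-stability} at a threshold forcing $\Delta(H)\le\varepsilon/2<\varepsilon$. The differences are cosmetic. The paper lists rational convex weights, normalizes each factor by $z_a^Tz_a$, and uses the rational threshold $\xi_0=\varepsilon/(8r'pr)$; you absorb the weights into $z_k$, renormalize the trace globally (which costs a constant factor somewhat larger than $2$, still harmless), and test $\|H-\widehat H\|_{\mathrm F}^2\le\xi^2$ because your $\xi$ itself need not be rational.
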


\begin{proof}
Proposition~\ref{prop:approximate-attainability} proves \np-hardness under polynomial-time Turing reductions. It remains to prove membership in
$\sans{PromiseNP}$.

Let
\begin{equation}
  m
  =
  \dim\bigl(\Sym_p(\R)\otimes\Sym_r(\R)\bigr)-1
  =
  \frac{p(p+1)r(r+1)}4-1,
\end{equation}
and set
\begin{equation}
  \xi_0=\frac{\varepsilon}{8(p+r+1)pr}.
\end{equation}
A certificate consists of at most $m+1$ rational weights $q_a\ge0$
and nonzero rational vectors $z_a\in\Q^p$ and $w_a\in\Q^r$, satisfying
$\sum_aq_a=1.$
Given the certificate, the verifier forms
\begin{equation}
  \widehat H
  =
  \sum_aq_a
  \frac{z_az_a^T}{z_a^Tz_a}
  \otimes
  \frac{w_aw_a^T}{w_a^Tw_a}.
\end{equation}
Note that this construction avoids square roots and ensures that everything remains rational. Thus, $\widehat H\in\mathcal K^{\R}_{p,r}$ exactly. The verifier accepts
if
\begin{equation}
  \norm{H-\widehat H}_{\mathrm F}\le\xi_0.
\end{equation}

Suppose first that the instance is a YES instance.
Theorems~\ref{theorem:finitecriterion} and \ref{th:quantum-encoding} imply $ H\in\mathcal K^{\R}_{p,r}$.
By Carath\'eodory's convex hull theorem, $H$ is a convex combination of at most
$m+1$ real pure product states. Approximating the weights and product
vectors by rationals gives a certificate satisfying
\begin{equation}
  \norm{H-\widehat H}_{\mathrm F}\le\xi_0.
\end{equation}
Since $\xi_0^{-1}$ has polynomial encoding length, the certificate
has polynomial size.

Conversely, suppose that the verifier accepts. By
Lemma~\ref{lem:upper-stability},
\begin{equation}
\begin{aligned}
  \Delta(H)
  &\le
  4(p+r+1)\sqrt{pr}\,\xi_0\\
  &=
  \frac{\varepsilon}{2\sqrt{pr}}\\
  &\le
  \frac{\varepsilon}{2}
  <
  \varepsilon.
\end{aligned}
\end{equation}
Hence no instance satisfying the NO promise
$\Delta(H)\ge\varepsilon$ can possess such an accepting certificate.
Therefore the gap problem belongs to $\sans{PromiseNP}$.
\end{proof}

\section{Attainability of the QCRB as quadratic equations}
\label{apen:existence}
In this appendix we rewrite the problem of deciding the exact attainability of the QCRB as a set of quadratic equations. Write each SLD operator as
\begin{equation}
    L_i=\begin{pmatrix}
        L_{\text{S},i}&L_{\text{SK},i}\\
        L_{\text{SK},i}^\dagger &L_{\text{K},i}
    \end{pmatrix}\;,
\end{equation}
where $L_{\text{K},i}$ can be chosen as any Hermitian matrix. Consider a simple two-parameter estimation problem. We know from Theorem~\ref{theorem:finitecriterion} that the QCRB is attainable if and only if one can choose $L_{\text{K},1},L_{\text{K},2}$ such that
\begin{equation}
\begin{split}
\label{eq:apencom}
&L_{\text{S},1}L_{\text{S},2}-L_{\text{S},2}L_{\text{S},1}+L_{\text{SK},1}L_{\text{SK},2}^\dagger-L_{\text{SK},2}L_{\text{SK},1}^\dagger=0\\
&L_{\text{S},1}L_{\text{SK},2}-L_{\text{S},2}L_{\text{SK},1}+L_{\text{SK},1}L_{\text{K},2}-L_{\text{SK},2}L_{\text{K},1}=0\\
&L_{\text{SK},1}^\dagger L_{\text{SK},2}-L_{\text{SK},2}^\dagger L_{\text{SK},1}+[L_{\text{K},1},L_{\text{K},2}]=0
\end{split}
\end{equation}
As there are no free terms in the first equation, this offers a simple check that can reveal when the QCRB is not attainable. Going forward, we assume the first equation is satisfied. Writing the independent real parameters of the Hermitian kernel blocks \(L_{\mathrm K,i}\) as a vector \(x\), the commutativity conditions are therefore equivalent, for any fixed extension dimension, to a finite system of polynomial equations of degree at most two:
\begin{equation}
\label{eq:quadratic}
\begin{split}
&b^{(m)}.x+c_l^{(m)}=0\\
&x.A^{(m)}.x+c_q^{(m)}=0\;,\\
\end{split}
\end{equation}
where $x$ is a vector encoding the free elements of all $L_{\text{K},i}$, $b^{(m)}$ is a vector, $A^{(m)}$ is a square matrix, and $c_l^{(m)}$ and $c_q^{(m)}$ are constants. $b^{(m)}$, $A^{(m)}$, $c_l^{(m)}$ and $c_q^{(m)}$ are completely determined from the problem. The commutativity conditions in Eq.~\eqref{eq:apencom} are equivalent to checking for the existence of a solution $x$ to the equations, Eq.~\eqref{eq:quadratic}. Thus, for a fixed extended kernel dimension, QCRB attainability has the form of a structured feasibility problem over the reals, closely related to problems studied in the existential theory of the reals; see, for example, problem (A1) of Ref.~\cite{schaefer2026existential}. However, we do not claim that the attainability problem is as hard as this complexity class. This is not proven here, as the form of the commutator in Eq.~\eqref{eq:apencom} does not allow for arbitrary quadratic equations in Eq.~\eqref{eq:quadratic}.

This perspective does however provide crucial insight into classes of problems that are easily decidable. For fixed kernel dimensions, for certain problems it is possible that the free vector $x$ may be completely constrained by the linear equations in Eq.~\eqref{eq:quadratic}. In this case, to decide whether the QCRB is attainable or not we simply check whether the quadratic equations are all satisfied for this $x$ for a sufficiently large kernel size (see Appendix~\ref{sec:finite-support}).

\section{Arbitrary outcomes reduce to finite outcomes}\label{sec:finite-support}
In this appendix we prove that the restriction to finite outcome POVMs (as used in Definition~\ref{def:target} and Theorem~\ref{theorem:finitecriterion}) is sufficient. Specifically we show that any model with only finite outcome POVMs can achieve the same mean, first order derivative and covariance. We start from a general arbitrary outcome POVM. Let $\Pi$ be a POVM on a measurable space $(\Omega,\Sigma)$ and let $e:\Omega\to\R^p$ be measurable.  Write
\begin{equation}
     p_0(F)=\Tr[\rho_0\Pi(F)]\;.
\end{equation}
The pair $(\Pi,e)$ is locally unbiased at $0$ if
\begin{equation}
  \int e_i\,dp_0=0,
  \qquad
  \int e_i(x)\,\Tr[\rho_j\Pi(dx)]=\delta_{ij}.
  \label{eq:localunbiased}
\end{equation}
Its covariance matrix is
\begin{equation}
  V=\int e(x)e(x)^T\,dp_0(x).
  \label{eq:covariance}
\end{equation}
Finite covariance means $\int e_i^2\,dp_0<\infty$ for every $i$.

\begin{proposition}[Finite-outcome POVMs are sufficient]
\label{prop:finite-support}
    Let $(\rho_0,\rho_1,\ldots,\rho_p)$ be a finite-dimensional model admitting Hermitian SLDs.  For any locally unbiased estimator of finite covariance on an arbitrary measurable outcome space, there is a finite-outcome estimator with exactly the same mean, derivative matrix, and covariance.  If $\dim\mathcal H=D$, at most $D^2+\frac{3p(p+1)}2$
outcomes are needed.
\end{proposition}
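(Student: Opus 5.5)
The plan is to compress the arbitrary POVM into a finite one using a Carathéodory-type argument on a finite-dimensional vector of moments. The quantities that must be preserved are all linear functionals of the operator-valued measures
\[
M_0(F)=\Pi(F),\qquad M_i(F)=\int_F e_i\,d\Pi,\qquad M_{ij}(F)=\int_F e_ie_j\,d\Pi,
\]
with $1\le i\le j\le p$. Indeed, completeness $\Pi(\Omega)=I$, the mean $\Tr[\rho_0 M_i(\Omega)]$, the derivatives $\Tr[\rho_j M_i(\Omega)]$, and the covariance $\Tr[\rho_0 M_{ij}(\Omega)]$ are all read off from these. First, one has to check that the $M_i$ and $M_{ij}$ are well defined and finite. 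This is the technical step I expect to be the main obstacle. Finite covariance only gives $\int e_i^2\,dp_0<\infty$, i.e.\ integrability against $\rho_0$, not against every state, so $\int e_i^2\,d\Pi$ may fail to be a bounded operator on $\ker\rho_0$.

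To handle this, first note that $\Tr[\rho_j\Pi(\cdot)]$ is absolutely continuous with respect to $p_0$. I would justify this via the SLD equation, since $\Tr[\rho_j\Pi(F)]=\Re\Tr[\rho_0 L_j\Pi(F)]$. By Cauchy--Schwarz, $|\Tr[\rho_0L_j\Pi(F)]|^2\le \Tr[\rho_0\Pi(F)]\,\Tr[\rho_0 L_j\Pi(F)L_j]$, so the unbiasedness integrals make sense. Second, only the compressions $\rho_0^{1/2}M\rho_0^{1/2}$ and $\rho_0^{1/2}M L_j^{\dagger}$-type quantities actually enter. So rather than operator moments on all of $\mathcal H$, I will preserve only the finite real vector
\[
\Phi(x)=\Bigl(\text{vec}\,\pi(x),\;\bigl(e_i(x)\Tr[\rho_j\pi(x)]\bigr)_{i,j},\;\bigl(e_i(x)e_j(x)\,\tfrac{dp_0}{d\mu}(x)\bigr)_{i\le j}\Bigr).
\]
Here $\pi=d\Pi/d\mu$ is the Radon--Nikodym density with respect to $\mu=\Tr\Pi/D$, which exists, is positive, and satisfies $\pi\le D\,I$.

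Next, apply Carathéodory's theorem for integrals. The integral $\int \Phi\,d\mu$ lies in the convex cone generated by $\{\Phi(x)\}$ (the closure issue is removed by the standard argument that the integral of an integrable function lies in the relative interior of its conic hull, or by induction on dimension). Hence it equals $\sum_{k=1}^N c_k\Phi(x_k)$ with $c_k\ge0$ and $N$ at most the real dimension of the target space. I then define the finite POVM $\Pi_k=c_k\pi(x_k)$ with estimator values $e(x_k)$. Completeness holds because the first component reproduces $\int\pi\,d\mu=I$. Local unbiasedness and the covariance are reproduced by the remaining components. The mean condition $\int e_i\,dp_0=0$ is implied by the $j=0$ case if I include $\rho_0$ among the $\rho_j$; alternatively, it comes out automatically from the derivative entries via the SLD structure.

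Finally, count dimensions. Hermitian $D\times D$ matrices contribute $D^2$ real dimensions, and the derivative and covariance data together contribute at most $3p(p+1)/2$. This gives $N\le D^2+\frac{3p(p+1)}{2}$ outcomes. The bound is stated loosely enough that a slightly coarser bookkeeping, such as the means plus the $p^2$ derivative entries plus the $p(p+1)/2$ covariance entries, fits under it.
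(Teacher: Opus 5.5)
Your proposal is correct and is essentially the paper's argument: a trace-measure Radon--Nikodym density of $\Pi$, the SLD Cauchy--Schwarz bound $|\Tr[\rho_j\pi]|^2\le\Tr[\rho_0\pi]\,\Tr[L_j\rho_0L_j\pi]$ to make $e_i\Tr[\rho_j\pi]$ integrable, and Richter--Tchakaloff applied to the density together with the scalar mean, derivative and second-moment components. Use your first option for the mean: the components $e_i\Tr[\rho_0\pi]$ must be put into $\Phi$ explicitly, because the derivative entries do not constrain them, so the ``SLD structure'' alternative does not work. With them included, the count is $D^2+p+p^2+p(p+1)/2$, which equals $D^2+\frac{3p(p+1)}{2}$ exactly rather than being a loose bound.
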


The discretization will be obtained using the classical Richter--Tchakaloff theorem \cite{Richter1957,BayerTeichmann}: if
$\mu$ is a positive measure, $E$ is a real vector space of dimension $n$, and
$\Phi:\Omega\to E$ is integrable, then, outside any prescribed $\mu$-null set,
there are points $x_1,\ldots,x_s$ and weights $\lambda_a>0$ such that for $s\le n$
\begin{equation}
  \int \Phi\,d\mu=\sum_{a=1}^s\lambda_a\Phi(x_a)\;.
  \label{eq:richter-tchakaloff}
\end{equation}
(See \cite[Satz~4]{Richter1957}; the precise measurable-map
form used here is \cite[Corollary~2]{BayerTeichmann}.)

\begin{proof}[Proof of Proposition~\ref{prop:finite-support}]
Let $(\Pi,e)$ be a locally unbiased estimator with finite covariance $V$.
Let $\mu(F)=\Tr [\Pi(F)]$, so that $\mu(\Omega)=D$. If $\mu(F)=0$, $\Tr [\Pi(F)]=0$, and $\Pi(F)\succeq0$ implies $\Pi(F)=0$. Hence every matrix element of the operator-valued measure $\Pi$ is absolutely continuous with respect to $\mu$. Applying the Radon--Nikodym theorem to each matrix element then gives a measurable $R:\Omega\to\Herm_D$ such
that, outside a $\mu$-null set,
\begin{equation}
  \Pi(F)=\int_F R(x)\,d\mu(x),
  \qquad R(x)\succeq0,
  \qquad \Tr [R(x)]=1,
  \qquad \int R(x)\,d\mu(x)=I.
  \label{eq:povmdensity}
\end{equation}
Going forward, we shall suppress the dependence on $x$ unless it is informative to include it. Thus \(R(x)=d\Pi/d\mu(x)\) is simply the trace-one density of the POVM; for a discrete POVM it corresponds to \(R_x=\Pi_x/\Tr[\Pi_x]\). Since \(R(x)\succeq0\) and \(\Tr [R(x)]=1\), \(R\) is bounded (in any matrix norm), and hence integrable because \(\mu(\Omega)=D<\infty\).

Fix SLDs $L_j$ and set
\begin{equation}
      a=\Tr[\rho_0R],
  \qquad c_j=\Tr[\rho_jR],
  \qquad b_j=\Tr[L_j\rho_0L_jR].
\end{equation}
Thus
\begin{equation}
    dp_0(x)=a(x)d\mu(x)\;,\qquad\Tr[\rho_j\Pi(dx)]=c_j(x)d\mu(x)
\end{equation}

The SLD equation and Hilbert--Schmidt Cauchy--Schwarz give, pointwise,
\begin{equation}
  |c_j|
  =\bigl|\operatorname{Re}\Tr[\rho_0L_jR]\bigr|\le\bigl|\Tr[\rho_0L_jR]\bigr|\le\sqrt{\Tr[\rho_0R]\Tr[L_j\rho_0L_jR]}
  = \sqrt{a\,b_j}\;,
  \label{eq:derivativebound}
\end{equation}
where we have applied Cauchy--Schwarz to $\rho_0^{1/2}R^{1/2}$ and $\rho_0^{1/2}L_jR^{1/2}$. Moreover,
\begin{equation}
      \int a\,d\mu=1,
  \qquad
  \int e_i^2a\,d\mu=V_{ii},
  \qquad
  \int b_j\,d\mu=\Tr[L_j\rho_0L_j]=J_{jj}\;.
\end{equation}
Hence $e_i\sqrt a$, $\sqrt a$, and $\sqrt{b_j}$ lie in $L^2(\mu)$ (the space of measurable functions whose square is integrable with respect to the measure $\mu$).
Equation~\eqref{eq:derivativebound} and Cauchy--Schwarz therefore show that
$e_i a$, $e_i c_j$, and $e_ie_k a$ are integrable:
\begin{equation}
    \begin{split}
        \int|e_ia|d\mu= \int|e_i|ad\mu\le\left(\int|e_i|^2ad\mu\right)^{1/2}\left(\int ad\mu\right)^{1/2}=\left(\int e_i^2ad\mu\right)^{1/2}=\sqrt{V_{ii}}<\infty\\
        \int|e_ic_j|d\mu\le\int|e_i|\sqrt{ab_j}d\mu\leq\left(\int|e_i|^2ad\mu\right)^{1/2}\left(\int b_jd\mu\right)^{1/2}=\sqrt{V_{ii}J_{jj}}<\infty\\
        \int|e_ie_k|ad\mu\le\left(\int|e_i|^2ad\mu\right)^{1/2}\left(\int|e_k|^2ad\mu\right)^{1/2}=\sqrt{V_{ii}V_{kk}}<\infty
    \end{split}
\end{equation}

Now apply \eqref{eq:richter-tchakaloff} to
\begin{equation}
  \Phi(x)=
  \left(
    R(x),
    (e_i(x)a(x))_i,
    (e_i(x)c_j(x))_{i,j},
    (e_i(x)e_j(x)a(x))_{1\le i\le j\le p}
  \right),
  \label{eq:momentmap}
\end{equation}
viewed in the real vector space
$\Herm_D\oplus\R^p\oplus\R^{p\times p}\oplus\Sym_p(\R)$ of dimension
\begin{equation}
  n=D^2+p+p^2+\frac{p(p+1)}2
   =D^2+\frac{3p(p+1)}2\;.  
\end{equation}  
We can think of $\Phi(x)$ as the vector of quantities whose integral value we wish to conserve. Apply Eq.~\eqref{eq:richter-tchakaloff} on the full-\(\mu\)-measure set on which \eqref{eq:povmdensity} and the pointwise bounds above hold. Define the finite estimator
\begin{equation}
  \Pi_r=\lambda_rR(x_r),
  \qquad
  \widehat e(r)=e(x_r).
  \label{eq:finitepovm}
\end{equation}
The $R$-coordinate gives $\sum_r\Pi_r=I$, and the remaining coordinates give
\begin{align*}
  \sum_r\widehat e_i(r)\Tr[\rho_0\Pi_r]
    &=\int e_i\,dp_0,\\
  \sum_r\widehat e_i(r)\Tr[\rho_j\Pi_r]
    &=\int e_i(x)\Tr[\rho_j\Pi(dx)],\\
  \sum_r\widehat e_i(r)\widehat e_j(r)\Tr[\rho_0\Pi_r]
    &=V_{ij}.
\end{align*}
Thus the finite estimator preserves the mean, derivative matrix, and covariance
exactly, with at most $n$ outcomes.
\end{proof}

\section{Deciding the exact attainability of the Nagaoka--Hayashi and Holevo Cram{\'{e}}r-Rao bounds is \np-hard}
\label{apen:NHCRHCRB}
Finally, in this appendix we prove that deciding the exact attainability of other important Cram{\'{e}}r-Rao bounds is \np-hard, extending the relevance of our results. The QCRB, Holevo Cram{\'{e}}r-Rao bound $\mathcal{C}_\text{H}$~\cite{holevo1973statistical,holevo2011probabilistic}, Nagaoka--Hayashi Cram{\'{e}}r-Rao bound $\mathcal{C}_\text{NH}$~\cite{nagaoka2005generalization,nagaoka2005new,hayashi1999,conlon2021efficient}, and most informative bound $\mathcal{C}_\text{MI}$ satisfy the following relation~\cite{conlon2022gap}
\begin{equation}
    \mathcal{C}_\text{MI}\geq\mathcal{C}_\text{NH}\geq \mathcal{C}_\text{H}\geq\mathcal{C}_\text{Q}\;.
\end{equation}
We now extend the \np-hardness of deciding whether there exists a $V$ such that $V=\mathcal{J}_\text{Q}^{-1}$ to the NHCRB and Holevo Cram{\'{e}}r-Rao bounds. Specifically we examine the hardness of deciding whether there exists a $V$ such that $\Tr[V]=\mathcal{C}_\text{NH}$ and $\Tr[V]=\mathcal{C}_\text{H}$. Given a multiparameter estimation problem defined by $\rho_\theta$, we define $\mathbb{S}_\theta= {1}_p\otimes \rho_\theta$. We can now state the NHCRB as:  \begin{equation}
\Tr[V]\geq\mathcal{C}_\text{NH}= \min_{\mathbb{L},\,X}\left\{
                          \mathbb{\Tr}[\mathbb{S}_\theta
                          \mathbb{L}]\,\big|\, \mathbb{L}_{jk}=\mathbb{L}_{kj}\, \mathrm{
       Hermitian, }\,\mathbb{L}\geq {X} X^\intercal,\, X_j\,\mathrm{Hermitian\,satisfying\,\eqref{eq:Xunbiased} }
    \right\} \;,\label{eq:gnb}
\end{equation}
where $\mathbb{L}$
is a $p$-by-$p$ matrix of Hermitian operators $\mathbb{L}_{jk}$, and the unbiased conditions are
\begin{align}
  \label{eq:Xunbiased}
  \Tr{\rho_0 X_j}= 0 \qquad\text{and}\qquad \frac{\partial}{\partial \theta_j} \Tr{\rho_\theta X_k}|_{\theta=0}=\delta_{jk}\;.
\end{align}
We now prove that for the hard model defined in Theorem~\ref{th:quantum-encoding}, $\mathcal{C}_\text{NH}=\mathcal{C}_\text{H}=\mathcal{C}_\text{Q}$, which immediately proves the desired result.

\begin{lemma}
    For the hard model defined in Theorem~\ref{th:quantum-encoding} with $\Tr[H]=1$, $\mathcal{C}_\text{NH}=\mathcal{C}_\text{H}=\mathcal{C}_\text{Q}$.
\end{lemma}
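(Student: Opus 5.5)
The plan is to exploit the chain $\mathcal{C}_\text{NH}\geq\mathcal{C}_\text{H}\geq\mathcal{C}_\text{Q}$ quoted above, so that it suffices to show $\mathcal{C}_\text{NH}\le\mathcal{C}_\text{Q}$ for the encoded model. I will do this by exhibiting a feasible pair $(\mathbb{L},X)$ in Eq.~\eqref{eq:gnb} whose objective equals $\Tr[\mathcal{J}_\text{Q}^{-1}]$. The construction has three ingredients: the natural SLD choice with all kernel blocks set to zero, an observation about the commutators of these SLDs, and a positive correction term supported entirely on the kernel $\mathcal N$.

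\textbf{Step 1 (choice of $X$).} Take the SLD representatives $L_i=T_i\oplus C_i(0)$ from Eq.~\eqref{eq:allslds}, i.e.\ with kernel blocks $K_i=0$ and no extension. Set $X_j=\sum_k(\mathcal{J}_\text{Q}^{-1})_{jk}L_k$. These are Hermitian. They satisfy $\Tr[\rho_0X_j]=0$, because $\Tr[\rho_0L_k]=\Tr[\rho_k]=0$. They also satisfy $\Tr[\rho_iX_j]=\langle L_i,X_j\rangle_{\rho_0}=\delta_{ij}$, so Eq.~\eqref{eq:Xunbiased} holds. Moreover $\Tr[\rho_0X_jX_j]=(\mathcal{J}_\text{Q}^{-1})_{jj}$.

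\textbf{Step 2 (commutators live only on the kernel).} Compute $[L_i,L_j]$ in the decomposition $\mathcal T\oplus\mathcal S\oplus\mathcal N$, using Eq.~\eqref{eqn:CC} with $K=0$:
\begin{itemize}
\item The $\mathcal T$ block is $[T_i,T_j]=0$.
\item The $\mathcal S\mathcal S$ block is $B_iB_j^T-B_jB_i^T=H_{ij}-H_{ji}=H_{ij}-H_{ij}^T$. This vanishes precisely because $H\in\Sym_p(\R)\otimes\Sym_r(\R)$, which forces every block $H_{ij}$ to be symmetric. This is where the symmetric restriction in Theorem~\ref{thm:gurvits} is essential, and it is the step I expect to be the crux.
\item The $\mathcal S\mathcal N$ blocks are $B_iK_j-B_jK_i=0$, since $K=0$.
\item The only surviving block is the $\mathcal N\mathcal N$ block $B_i^TB_j-B_j^TB_i$.
\end{itemize}
Hence each $[X_j,X_k]$, being a linear combination of these commutators, is supported on $\mathcal N$: it has the form $P_{\mathcal N}[X_j,X_k]P_{\mathcal N}$, where $P_{\mathcal N}$ is the projector onto $\mathcal N$.

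\textbf{Step 3 (choice of $\mathbb{L}$).} Define the $p\times p$ operator matrix
\[
\mathbb{L}_{jk}=\tfrac12(X_jX_k+X_kX_j)+c\,\delta_{jk}P_{\mathcal N}.
\]
Each entry is Hermitian and $\mathbb{L}_{jk}=\mathbb{L}_{kj}$. Then $\mathbb{L}-XX^\intercal=\Delta$ with $\Delta_{jk}=-\tfrac12[X_j,X_k]+c\,\delta_{jk}P_{\mathcal N}$. The block matrix $\Delta$ is Hermitian: since $[X_j,X_k]^\dagger=-[X_j,X_k]$, we get $\Delta_{kj}=\Delta_{jk}^\dagger$. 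By Step 2 it is supported on $\R^p\otimes\mathcal N$, where it equals $c\,I$ plus a fixed bounded Hermitian matrix. So $\Delta\succeq0$ once $c$ is at least the operator norm of $\bigl(\tfrac12[X_j,X_k]\bigr)_{jk}$.

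\textbf{Step 4 (value of the objective).} Since $\rho_0P_{\mathcal N}=0$, we have
\[
\Tr[\mathbb{S}\mathbb{L}]=\sum_j\Tr[\rho_0X_j^2]=\Tr[\mathcal{J}_\text{Q}^{-1}]=\mathcal{C}_\text{Q}.
\]
Therefore $\mathcal{C}_\text{NH}\le\mathcal{C}_\text{Q}$, and combined with the chain of inequalities, all three bounds coincide.

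The main point to verify carefully is Step 2, namely that the support–kernel commutator blocks vanish. If they did not, a positive semidefinite $\Delta$ with zero support diagonal could not absorb them. Choosing $K_i=0$ together with $L_{\text{S},i}=T_i\oplus0_{\mathcal S}$ is exactly what kills these blocks.
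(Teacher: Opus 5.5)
Your proof is correct and follows essentially the same route as the paper. Both take $X_i=\sum_j(\mathcal{J}_\text{Q}^{-1})_{ij}L_j^{(0)}$ and build $\mathbb{L}$ to match the products $X_jX_k$ on the support $\mathcal T\oplus\mathcal S$, which is possible because the blocks $H_{ij}$ are symmetric, while adding a positive correction on the kernel $\mathcal N$ that $\rho_0$ does not see. The only difference is cosmetic: the paper conjugates $Z_{ij}=T_iT_j\oplus H_{ij}\oplus\delta_{ij}I_q$ by $\mathcal{J}_\text{Q}^{-1}$ and uses $\Tr[H]=1$ to get $I_{pq}\succeq\tilde B\tilde B^T$, whereas your large constant $c$ absorbs the kernel commutator without needing the trace normalization.
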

\begin{proof}
The SLD operators for the hard model are given by 
\begin{equation}
  L_i^{(0)}=
  \begin{pmatrix}
    T_i\oplus0_{\mathcal S}&\binom{0}{B_i}\\[2mm]
    (\,0\ \ B_i^T\,)&0_\mathcal{N}
  \end{pmatrix}\;.
\end{equation}
Recalling that $H_{ij}=B_iB_j^T$ we have that
\begin{equation}
    L_i^{(0)}L_j^{(0)}=T_iT_j\oplus H_{ij}\oplus B_i^TB_j\;.
\end{equation}
Using this we now construct a feasible solution to the NHCRB that gives $\mathcal{C}_\text{Q}=\mathcal{C}_\text{NH}$. Start from the matrix 
\begin{equation}
  Z_{ij}=T_iT_j\oplus H_{ij}\oplus\delta_{ij} I_q\;.
\end{equation}
We define $L=[L_1^{(0)},...,L_p^{(0)}]^T$ We first prove $Z\succeq LL^T$ before relating this to the NHCRB. The kernel space is the only space on which $Z-LL^T$ is non-zero. On the kernel space $Z-LL^T$ is given by
\begin{equation}
    I_{pq}-\tilde{B}\tilde{B}^T\;,
\end{equation}
where $\tilde{B}=[B_1^T,...B_p^T]^T$
 Observe that $R=\tilde{B}\tilde{B}^T\succeq0$ and $\Tr[R]=\sum_i\Tr[B_i^TB_i]=\Tr[H]=1$. Therefore, the eigenvalues of $R$ are at most 1 and so $I_{pq}\succeq R$. This proves $Z\succeq LL^T$.

 Now choose $X_i=\sum_j(\mathcal{J}_\text{Q}^{-1})_{ij}L_j^{(0)}$ and $\mathbb{L}_{ij}=\sum_{ab}(\mathcal{J}_\text{Q}^{-1})_{ia}(\mathcal{J}_\text{Q}^{-1})_{jb}Z_{ab}$. This solution $\mathbb{L}$ satisfies $\mathbb{L}_{ij}=\mathbb{L}_{ji}$ and each $\mathbb{L}_{ij}$ is Hermitian. The vector $X=(X_1,...,X_p)$ satisfies the unbiased conditions. Finally, $\mathbb{L}\succeq XX^T$ follows from $Z\succeq LL^T$. 

Using Eq.~\eqref{eq:qfiTH} we see that $\Tr[\rho Z_{ab}]=\mathcal{J}_{\text{Q},ab}$. The NHCRB can then be computed as
 \begin{equation}
\Tr[\mathbb{S}_\theta\mathbb{L}]=\sum_{i=1}^{p}\Tr[\rho \mathbb{L}_{ii}]=\sum_{i,a,b}(\mathcal{J}_\text{Q}^{-1})_{ia}(\mathcal{J}_\text{Q}^{-1})_{ib}\mathcal{J}_{\text{Q},ab}=\Tr[\mathcal{J}_\text{Q}^{-1}\mathcal{J}_\text{Q}(\mathcal{J}_\text{Q}^{-1})^T]=\mathcal{C}_\text{Q}\;.
 \end{equation}
 Therefore $\mathcal{C}_\text{NH}=\mathcal{C}_\text{Q}$, which immediately implies $\mathcal{C}_\text{NH}=\mathcal{C}_\text{H}=\mathcal{C}_\text{Q}$.
\end{proof}

Therefore, for the class of hard problems considered in this paper if one had access to an oracle which decided whether there exists a $V$ such that $\Tr[V]=\mathcal{C}_\text{NH}$, we could use this oracle to decide whether $V=\mathcal{J}_\text{Q}^{-1}$. Therefore, the problem of deciding whether $\Tr[V]=\mathcal{C}_\text{NH}$ or $\Tr[V]=\mathcal{C}_\text{H}$ is also \np-hard.

\end{document}